\DeclareMathAlphabet{\mathpzc}{OT1}{pzc}{m}{it}
\definecolor{orange}{rgb}{1,.5,0}
\definecolor{purple}{rgb}{0.65, 0, 1}
\definecolor{red}{rgb}{0.7,.1,0.2}
\newtheorem{prop}{Proposition}
\newtheorem{cor}{Corollary}
\newtheorem{remark}{Remark}
\theoremstyle{definition}
\renewcommand{\qedsymbol}{\rule{0.7em}{0.7em}}
\g@addto@macro\th@remark{\thm@headpunct{}}
\theoremstyle{remark}
\newtheorem{assump}{}
\declaretheoremstyle[headfont=\normalfont]{assump}
\tikzstyle{vertex}=[circle,thick,draw=blue!75,fill=blue!20,scale=0.8]
\tikzstyle{selected vertex} = [vertex, fill=red!24]
\tikzstyle{edge} = [draw,thick,-]
\tikzstyle{weight} = [font=\small]
\tikzstyle{selected edge} = [draw,line width=5pt,-,red!50]
\tikzstyle{ignored edge} = [draw,line width=5pt,-,black!20]
\title{Discrete Responses in Bivariate Generalized Additive Models}
\author{Francesco Donat\thanks{Corresponding Author. University College London, Gower Street, London, WC1E 6BT, UK. Tel.: +44 (0)2076791223. E-mail: \href{mailto:f.w.donat@gmail.com}{f.w.donat@gmail.com}}\mbox{ }}
\author{Giampiero Marra}
\affil{\normalsize{Department of Statistical Science\\University College London}}
\date{}
\begin{document}

\maketitle

\begin{abstract}
\noindent A conceptual framework for the analysis of dichotomous and ordinal polychotomous responses within a penalized multivariate Generalized Linear Model is introduced. The proposed structure allows for a rather flexible predictor specification through the inclusion of non-parametric and spatial covariate effects, and the characterisation of the distribution of the stochastic model components with copulae of univariate marginals. Analytic derivations for the particular case of Gaussian marginals within a  bivariate system of dichotomous outcomes are also provided, and the framework is subsequently illustrated through the estimation of the HIV prevalence in Zambia using the 2007 DHS dataset.\vspace{0.2cm}

\noindent\textbf{Key-words:} Copulae; Generalized Additive Models; HIV Prevalence; Multivariate Discrete Data; Penalized Regression Splines.
\end{abstract}

\section{Introduction}
Generalized Linear Models (GLMs, \citealp{NelderWedderburn_72}) are a comprehensive class of models that allows us to conduct estimation and inference for a variety of response types within the same coherent unifying framework. However, despite their undoubted relevance in applied research, they rely on a purely parametric specification of the covariate effects on the response, which effectively constraints the linear predictors to be a determined fixed-order polynomial, for instance. This is a strong requirement, as one cannot typically expect to know in advance the actual form of covariate-response relationships. This is especially the case in observational studies where their ``experimental'' situations are not conducted in a controlled manner. An actual risk for the researcher, therefore, would be to incorrectly specify the functional form of covariate effects, hence to potentially generate a non-negligible source of bias whenever these are not adequately represented. 

An existing approach to overcome this limitation is to consider a more flexible class of models that permits the representation and estimation of the additive effects of some continuous covariates of interest in a data-driven way. Methods of this kind are usually termed semi-parametric in the statistical literature (although this denomination is generally not shared by econometricians) because they conjugate both a parametric and a non-parametric characterisation of the functional forms of the regressors. Specifically, whenever the baseline structure is that of a GLM, the so-called Generalized Additive Models emerge (\citealp{HastieTibshirani_86,HastieTibshirani_90}), which complement their parametric counterparts by adopting a regression spline approach, implemented in a computationally stable and efficient manner by \cite{Wood_06}. Nonetheless, as any traditional regression analysis, GAMs are effectively models for the mean of a random variable possessing a certain conditional distribution function. To enhance flexibility, therefore, it is also licit to extend the framework to qualify the dependence of any moment of order higher than one on some explanatory variables of interest. In this way, the risks of mis-specifying the models and of conducting invalid inference from them is alleviated. This approach usually comes under the name of distributional regression, whose ideas have been variously incorporated within a GAM setting: for example, \cite{RigbyStasinopoulos_05} proposed a Generalized Additive Model for Location, Scale and Shape (GAMLSS), whose framework has been recently extended to the multivariate case by \cite{KKKL_15}. A review of these and of some other existing methodologies is presented in \cite{Kneib_13}. This line of research then seeks to achieve a higher degree of flexibility by increasing the number of distributions allowed by the proposed model representations, and by including in their respective specifications various kinds of covariate effects.

The present work aims at following these auspices in the context of discrete outcomes. Starting from the definition of a GAM for a $J$-variate vector of categorical responses as a penalized GLM, we discuss the conceptual representation of dichotomous and ordinal polychotomous dependent variables in terms of a triplet $(r,F_J,\mathbf Z)$, and of a penalty matrix $\mathbf S_{\boldsymbol\lambda}$ that allows us to incorporate in the model various instances, like non-parametric, spatial and random covariate effects. A method for dealing with a mixture of those two types of responses is also outlined. We then show how a generic estimation algorithm can be derived and inference subsequently conducted within the resulting multivariate Generalized Additive Model, and we argue that such algorithm can be, \textit{mutatis mutandis}, applied to any model representable in the $(r,F_J,\mathbf Z)$ form. Although the pace of the discussion is intentionally kept at a quite generic level, connections between the proposed framework and some existing models are made. These have the dual scope of motivating our representation with well-developed examples from the literature and, at the same time, of offering a way to extend them to the more flexible predictor specifications that form the domain of our work. In particular, attention is given to nested models accounting for unmeasured residual confounding: an instance rather frequent in observational studies and that may lead to detrimental consequences on the parameter estimates whenever it is not adequately controlled for (e.g.\ \citealp{Becher_92}). The proposed representation is then used to define a sample selection model for dichotomous responses to credibly assess the human immunodeficiency virus (HIV) prevalence in Zambia. With this empirical illustration, we give evidence of the flexibility of our generic representation which also permits the inclusion of multivariate distributions defined through copulae of univariate marginals, and the dependence of the corresponding association parameter to be expressed as a functional of the available data. In summary, therefore, this paper contributes to the literature by providing a flexible tool for the representation, estimation and inference in multivariate GLMs for discrete responses admitting non-parametric and spatial-dependent covariate effects, and by accounting for the unification of models for residual confounding under the same conceptual frame.

\section{A GAM Representation for Discrete Responses}\label{sct:GAMdiscr.GAMrep}
Let $\boldsymbol Y=(Y_1,\ldots,Y_J)^\top$ be a random vector with support the discrete set $\mathcal K:=\mathcal K_1\times\cdots\times\mathcal K_J$, where $\mathcal K_j:=\{1,\ldots,K_j\}$ and $\#(\mathcal K_j)=K_j<\infty$ for every $j\in\mathcal J$, $\mathcal J:=\{1,\ldots,J\}$; namely we consider each variable $Y_j$ to have finite $K_j$ levels. The set $\mathcal K_j$ is assumed here to collect both qualitative and quantitative elements, as well as variables measured on the nominal or ordinal scale (\citealp{Stevens_46}). Specifically, the former differentiates items based only on the categories they belong to, whereas the latter allows also for a rank order by which the realisations of $Y_j$ can be sorted, but still the relative degree of difference between them lacks of any meaningful interpretation. For notational convenience, we represent each $k_j$ by a natural number, $\mathcal K_j\subset\mathds N$, with the convention that, wherever the support of $Y_j$ is ordinal, we postulate the existence of an isomorphism that maps bijectively each element of the qualitative ordinal set $\mathcal K_j^*$ onto $\mathcal K_j$. In this case it holds: $\bar k_j^*\preceq k_j^*$ in $\mathcal K_j^*$ if and only if $\bar k_j\preceq k_j$ in $\mathcal K_j$, and we take the set $(\mathcal K_j,\preceq)$ to be totally ordered.

In analogy with the approach outlined in \cite{PTG_14} for the univariate case, we consider a regression of the probability $\pi_k=\mathbb P[\boldsymbol Y=k|\boldsymbol X=\boldsymbol x]$, with $k:=(k_1,\ldots,k_J)^\top\in\mathcal K$, on some covariates $\mathbf x:=\textup{vec}(\mathbf x_1,\ldots,\mathbf x_J)$ defined through a Generalized Linear Model form
\begin{equation}
\boldsymbol\pi=\boldsymbol g^{-1}(\boldsymbol\eta):=(\boldsymbol r^{-1}\circ\mathcal F)(\boldsymbol\eta_1,\ldots,\boldsymbol\eta_{K-1}),\label{eq:GAMdiscr.GLM}
\end{equation}
where $\boldsymbol r:\mathcal M\longrightarrow\mathcal P$ is a diffeomorphism from $\mathcal M:=\{(0,1)^{K-1}|\boldsymbol 1^\top\boldsymbol\pi<1\}$ to an open subset $\mathcal P$ of $(0,1)^{K-1}$, and with $\boldsymbol\pi:=\{\pi_k\}_{k\in\mathcal K\setminus\{K\}}$. Model (\ref{eq:GAMdiscr.GLM}) also comprises the map $\mathcal F:\mathcal S\longrightarrow\mathcal M$, with $\mathcal S\subset\mathds R^{K-1}$, and the array $\mathcal F(\boldsymbol\eta):=(F_J(\boldsymbol\eta_1),\ldots,F_J(\boldsymbol\eta_{K-1}))^\top$ is taken to collect fully-specified $J$-variate distribution functions, each of them evaluated at $\boldsymbol\eta_{k}:=(\eta_{1,k_1},\ldots,\eta_{J,k_J})^\top=\mathbf Z\boldsymbol\beta_k$, a linear predictor. Wherever needed, we assume that the elements of $\mathcal K$ obey a lexicographical order, that is $(\bar k_1,\ldots,\bar k_J)\preceq(k_1,\ldots,k_J)$ if and only if $\bar k_j\preceq k_j$ for all $j\in\mathcal J$ or $(\bar k_j=k_j\mbox{ }\wedge\mbox{ }\bar k_{\bar\jmath}\preceq k_{\bar\jmath}\mbox{ }\textup{for some}\mbox{ }\bar\jmath\in\mathcal J)$. A more traditional GLM representation for the $k$-th category can be recovered from (\ref{eq:GAMdiscr.GLM}) and reads as
\begin{equation}
r(\pi_k)=F_J(\boldsymbol\eta_k)=F_J(\mathbf Z\boldsymbol\beta_k),\label{eq:GAMdiscr.GLM2}
\end{equation}
where $r$ is now a function specific for the type of the responses $\boldsymbol Y$. For instance, in the univariate framework, dichotomous variables would set $r_j$ such that $\pi_{k_j}\mapsto\pi_{k_j}$, the identity map, therefore (\ref{eq:GAMdiscr.GLM2}) reduces to any model for the binary outcome, say a logit or a probit, depending on the definition of $F_1$. Models for ordinal polychotomous responses, as the Cumulative Link Model (CLM) of \cite{McCullagh_80}, also possess this representation and set the left-hand side of (\ref{eq:GAMdiscr.GLM2}) as $r_j(\pi_{k_j})=\pi_1+\cdots+\pi_{k_j}$. Although more specifications of different univariate response types are illustrated in \cite{PTG_14}, in this work we confine ourselves to the sole study of dichotomous and ordinal outcomes, since they are the most frequent instances of the class of models we aim at developing. 

As (\ref{eq:GAMdiscr.GLM2}) explicates, any GLM for discrete responses is fully characterised by the triplet $(r,F_J,\mathbf Z)$, where the design matrix $\mathbf Z$ depends on the covariates $\mathbf x$, though not necessarily coinciding with them. For example, let the polychotomous response $\boldsymbol Y$ follow the model
\begin{equation}
r(\pi_k)=\sum_{\tilde k_1\le k_1}\cdots\sum_{\tilde k_J\le k_J}\pi_{\tilde k_1,\ldots,\tilde k_J}=F_J(\boldsymbol c_k-\mathbf X\boldsymbol\beta)=F_J(\mathbf Z\boldsymbol\beta_k),\label{eq:GAMdiscr.GLMOrdinal}
\end{equation}
then $\mathbf Z:=\textup{diag}(\mathbf z_1^\top,\ldots,\mathbf z_J^\top)$ and $\boldsymbol\beta_k:=\textup{vec}(\boldsymbol\beta_{1,k},\ldots,\boldsymbol\beta_{J,k})$, where $\mathbf z_j:=(1,-\textup x_{j,1},\ldots,-\textup x_{j,m_j})^\top$ and $\boldsymbol\beta_{j,k}:=(c_{j,k_j},\beta_{j,1},\ldots,\beta_{j,m_j})^\top$. In the proceeding analysis, we call the $c_{j,k_j}$'s threshold parameters or cut points, and we assume they are the only elements in the corresponding linear predictor $\eta_{j,k_j}$ to depend on the categories of $Y_j$. We also stress that only $K_j-1$ cut points are effectively estimable in this framework because, in order to allow the domain of $F_J$ to coincide with the extended real hyper-plane $\mathds R^J$, we need to impose $c_{j,K_j}=\infty$ for any $j$ and $c_{j,0}:=c_{j,1-1}=-\infty$. As a consequence, a dichotomous response with support $\mathcal K_j:=\{0,1\}$ would set the only threshold to $0$, and the model intercept is now estimable.

\paragraph{Ordinal Polychotomous Outcomes}
This instance is of some interest in terms of the proposed GLM specification and worth to be discussed further. Notice first that the given definition of $r(\pi_k)$ is posing a constraint to the set $\mathcal P$. Specifically, by assuming $\bar k_j\preceq k_j$, we have
\begin{eqnarray}
r(\pi_k)&=&\sum_{\tilde k_1\le k_1}\cdots\sum_{\tilde k_J\le k_J}\pi_{\tilde k_1,\ldots,\tilde k_J}\nonumber\\
&=&\sum_{\tilde k_1\le k_1}\cdots\sum_{\tilde k_j\le\bar k_j}\cdots\sum_{\tilde k_J\le k_J}\pi_{\tilde k_1,\ldots,\tilde k_J}+\sum_{\tilde k_1\le k_1}\cdots\sum_{\tilde k_j\in[\bar k_j,k_j]}\cdots\sum_{\tilde k_J\le k_J}\pi_{\tilde k_1,\ldots,\tilde k_J}\nonumber\\
&=&r(\pi_{\bar k})+r'(\pi_k)\ge r(\pi_{\bar k})\nonumber
\end{eqnarray}
since $r'(\pi_k)$ is the sum of probability measures. We also deduce $\bar k:=(k_1,\ldots,\bar k_j,\ldots,k_{J-1})\preceq (k_1,\ldots,k_j,\ldots,k_J)=:k$ by the assumed lexicographical order. Hence $r(\pi_{\bar k})\le r(\pi_k)$ for any $\bar k\preceq k$, and $\mathcal P:=\{\boldsymbol r\in(0,1)^{K-1}|r(\pi_{\bar k})\le r(\pi_k),\textup{ for all }\bar k\preceq k\textup{ and }\bar k,k\in\mathcal K\}$. 

If this restriction comes from the very construction of a CLM, a second one emerges to let the model meet a general coherency condition. To establish this result, the inspection of (\ref{eq:GAMdiscr.GLMOrdinal}) reveals that the linear predictors $\{\boldsymbol\eta_k\}$ depend on the element $k$ of the discrete ordered set $\mathcal K$ that one attempts to model. The sought coherency requires, therefore, the definition of a specific correspondence between the order relations existing in $k\in\mathcal K$ with those in $\boldsymbol\eta_k\in\mathcal S$. This is identified, in particular, in the order embedding of each $\mathcal K_j$ into a relevant subset of the real line as induced by the thresholds $\{c_{j,k_j}\}$: in this way, it is possible to construct non-overlapping hyper-rectangles in $\mathds R^J$ isomorphic to $(k_1,\ldots,k_J)\in\mathcal K$. In terms of a multivariate CLM, the order embedding is guaranteed by taking the cut points $\{c_{j,k_j}\}$ to be an increasing sequence in $k_j$ for every $j\in\mathcal J$ wherever, as stated above, the threshold parameters are the only quantities in the linear predictors depending on the categories of $Y_j$. Consequently, it follows the isomorphism $\{\boldsymbol\eta_k\}\cong\{k\}$, meaning that there exists a bijection $\varphi:\mathcal K\longrightarrow\mathcal S$ such that $\varphi(\bar k):=\boldsymbol\eta_{\bar k}\preceq\boldsymbol\eta_k=:\varphi(k)$ in $\mathcal S$ if and only if $\bar k\preceq k$ in $\mathcal K$. In this case, the domain of $\mathcal F$ is then restricted to be the set $\mathcal S:=\{\boldsymbol\eta\in\mathds R^{K-1}|\boldsymbol\eta_{\bar k}\preceq\boldsymbol\eta_k,\textup{ for all }\bar k\preceq k\textup{ and }\bar k,k\in\mathcal K\}$. In the bivariate case, $J=2$, \cite{Dale_86} imposed a strict monotonicity on the cut points to imply a non-degenerate probability measure on $\mathcal K$. Although this condition would in turn debar one possible source of the Maximum Likelihood Estimator to be located at the boundary of the parameter space with non-null probability (refer to \citealp{Haberman_80} for the univariate case), we reckon this restriction has to be avoided as it arbitrarily excludes a still admissible estimate, albeit at the boundary. Arguably, two congruent subsequent cut points are not in contrast with the coherency principle. In fact, given any two elements $\bar k$ and $k$ of $\mathcal K$ such that either $\bar k_j=k_j$ or $\bar k_j\prec k_j$, the coherency implies that the occurrence $c_{j,\bar k_j}=c_{j,k_j}$ is verified if and only if $\bar k_j=k_j$, that is whenever $\mathcal K:=\mathcal K_1\times\cdots\times(\mathcal K_j\setminus\{\bar k_j\})\times\cdots\times\mathcal K_J$, unless $\bar k_j$ is an element of zero probability mass in $\mathcal K$. Either cases correspond to observing zero counts for the $k_j$-th category, but with the coherency would be still in place.

Under this principle, it is possible to motivate ordinal polychotomous responses through a generating continuous latent random vector $\boldsymbol Y^*:=(Y_1^*,\ldots,Y_J^*)^\top$ in $\mathds R^J$ in such a way that, upon letting $\boldsymbol\epsilon$ be the stochastic component in the regression $\boldsymbol Y^*=\mathbf X\boldsymbol\beta+\boldsymbol\epsilon$, it holds
\begin{equation}
\{\boldsymbol Y\preceq k\}\Longleftrightarrow\{\boldsymbol\epsilon\le\boldsymbol\eta_k\},\label{eq:GAMdiscr.manifestLatent}
\end{equation}
where the right-hand side is intended component-wise as $\epsilon_j\le\eta_{j,k_j}$ for every $j$.

\subsection{Specification of the Linear Predictors in a Penalized GLM}\label{sct:GAMdiscr.GLMpenal}
Linear predictors enter representations (\ref{eq:GAMdiscr.GLM}) or (\ref{eq:GAMdiscr.GLM2}) as domain of the distribution functions collected in $\mathcal F$, and are fully characterised by the design matrix $\mathbf Z$ for any vector of parameters $\boldsymbol\beta_k$. In the proceeding, we assume that each predictor $\eta_{j,k_j}$ depends parametrically on some covariates $\mathbf x_j$, and through an additive form of unknown smooth functions $s_{j,l_j}:\mathds R\longrightarrow\mathds R$ for the remaining continuous regressors $\mathbf v_j:=(\textup v_{j,1},\ldots,\textup v_{j,L_j})^\top$. The resulting functional form is then termed semi-parametric, and defines the class of (Vector) Generalized Additive Models (VGAMs; \citealp{YeeWild_96}). We prefer, however, to adopt the alternative terminology of penalized GLMs as, in our opinion, it reflects better the features of the class of models we discuss beyond the traditional domain of GAMs. 

For dichotomous and ordinal polychotomous responses we define the linear predictor to be
\begin{equation}
\eta_{j,k_j}(\mathbf x_j,\mathbf v_j,\vartheta)=c_{j,k_j}-\mathbf x_j^\top\boldsymbol\beta_j-s_{j,1}(\textup v_{j,1})-\cdots-s_{j,L_j}(\textup v_{j,L_j})\mbox{ }\mbox{ }\mbox{ }\mbox{ }\mbox{ }\textup v_{j,l_j}\in\mathds R,\label{eq:GAMdiscr.linPred}
\end{equation}
where $\boldsymbol\vartheta$ denotes the parameter vector, and the smooth functions are represented by regression splines using the approach popularised in the literature by \cite{EilersMarx_96}. Assume first that we have a sample of $n$ observations indexed by the subscript $i$. The underlying idea of the method is to approximate each curve by a linear combination of known basis spline functions, $b_{j,l_j,h_j}$, for $h_j=1,\ldots,H_j$, and unknown regression parameters to be estimated within the system, $\delta_{j,l_j,h_j}$. In our notation, $h_j$ is employed to count the bases, as delimited by some knot points in the interior of $[\textup v_{j,l_j,(1)},\textup v_{j,l_j,(n)}]$ for every $j$. Upon defining $\boldsymbol b_{j,l_j}(\textup v_{j,l_j,i}):=(b_{j,l_j,1}(\textup v_{j,l_j,i}),\ldots,b_{j,l_j,H_j}(\textup v_{j,l_j,i}))^\top$, and $\boldsymbol\delta_{j,l_j}$ the corresponding $H_j$-dimensional vector of parameters associated with the smooths, it holds 
\begin{equation}
s_{j,l_j}(\textup v_{j,l_j,i})\approx\boldsymbol\delta_{j,l_j}^\top\boldsymbol b_{j,l_j}(\textup v_{j,l_j,i}).\nonumber
\end{equation}
The evaluation of $\boldsymbol b_{j,l_j}$ for each $i$ yields $H_j$ curves -- encompassing different degrees of complexity -- that give, once multiplied by some real-valued parameter vector and then summed, an estimated curve for $s_{j,l_j}$. Basis functions are usually chosen to have convenient mathematical properties and good numerical stability: possible instances are B-splines, cubic regression and low-rank thin plate regression splines (e.g.\ \citealp{RWC_03} and \citealp{Wood_03}). For identifiability purposes, a centering constraint such as $\sum_i s_{j,l_j}(\textup z_{j,l_j,i})=0$ for every $l_j$ has to be imposed, which is automatically incorporated in our model representation using the parsimonious approach of \cite{Wood_06}.

We are now in the position to express the functional form of linear predictors (\ref{eq:GAMdiscr.linPred}) through a more compact and comprehensive representation. To this end, let us define $\boldsymbol\beta_{[j,l_j]}:=\boldsymbol\delta_{[j,l_j]}$ the sub-vector of $\boldsymbol\beta$ corresponding to the $(j,l_j)$-th smooth and, accordingly, $\mathbf X_{[j,l_j]}$ the covariate matrix whose $i$-th row is $\boldsymbol b_{j,l_j}^\top(\textup v_{j,l_j,i})$. It then follows that the $n$-dimensional vector of linear predictors for the $j$-th response can be written as:
\begin{equation}
\boldsymbol\eta_j:=\boldsymbol c_j-\mathbf X_{j,1}\boldsymbol\beta_{j,1}-\cdots-\mathbf X_{j,M_j}\boldsymbol\beta_{j,M_j}=\mathbf Z_j\boldsymbol\beta_j,\label{eq:GAMdiscr.predOrdinal}
\end{equation}
where $\mathbf Z_j:=(\boldsymbol 1_n,-\mathbf X_{j,1},\ldots,\mathbf X_{j,m_j},\ldots,\mathbf X_{j,M_j})$ and $\boldsymbol\beta_j:=\textup{vec}(\boldsymbol c_{j,k_j},\boldsymbol\beta_{j,1},\ldots,\boldsymbol\beta_{j,M_j})$. We further assume $j=1,\ldots,\bar J$, with $\bar J\ge J$ to possibly specify any association parameter implied by the distribution $F_J$ in terms of some observed independent variables. So re-written, the linear predictors conform notationally with the GLM given in (\ref{eq:GAMdiscr.GLMOrdinal}), with the caveat that now they can be indifferently used to represent linear and non-linear covariate effects within a generic GLM for multivariate discrete data. To see this, set $\boldsymbol\eta:=(\boldsymbol\eta_1,\ldots,\boldsymbol\eta_{\bar J})$ to be the array whose $i$-th row is $\boldsymbol\eta_k=(\eta_{1,k_1},\ldots,\eta_{J,k_J},\ldots,\eta_{\bar J})^\top$ and, accordingly, $\boldsymbol\beta_k:=\textup{vec}(\boldsymbol\beta_1,\ldots,\boldsymbol\beta_{\bar J})$. Thus $\mathbf Z_i:=\textup{diag}(\mathbf z_{1,i}^\top,\ldots,\mathbf z_{\bar J,i}^\top)$, with $\mathbf z_{j,i}^\top$ being the $i$-th row of $\mathbf Z_j$, and the linear form that defines the right-hand side of (\ref{eq:GAMdiscr.GLMOrdinal}) is now recovered as $\boldsymbol\eta_k:=\mathbf Z_i\boldsymbol\beta_k$.

\paragraph{Characterisation and Definition of a GAM as a Penalized GLM}
To enforce certain properties of the covariate effects, a ridge-type penalisation is assigned to each column of $\mathbf Z_j$, namely $\mathcal P_{j,m_j}:=\lambda_{j,m_j}\boldsymbol\beta_{j,m_j}^\top\overline{\mathbf S}_{j,m_j}\boldsymbol\beta_{j,m_j}$, where the dimension of $\overline{\mathbf S}_{j,m_j}$ is generically denoted by $q$. The smoothing (or tuning) parameter $\lambda_{j,m_j}\in[0,\infty)$, in particular, is introduced here to control the trade-off between smoothness and fitting in the non-parametric estimation of $s_{j,m_j}$. Specifically, as $\lambda_{j,m_j}$ tends to zero, less penalisation is attached to the regression coefficient $\boldsymbol\beta_{j,m_j}$ and the estimation occurs either at the pre-specified polynomial form for the parametric model components, or at the spline interpolation for the unknown functions. On the other hand, an infinite value of $\lambda_{j,m_j}$ results in the fitting of a straight line, a situation also known as over-smoothing.

The penalty $\mathcal P_{j,m_j}$ can be used to describe several covariate effects in the same unifying manner. In particular, a parametric functional form would set $\overline{\mathbf S}_{j,m_j}=\boldsymbol 0_{q,q}$, and the corresponding $\mathbf X_{j,m_j}\boldsymbol\beta_{j,m_j}$ reduces to $\mathbf x_{j,m_j}\beta_{j,m_j}$, with $\mathbf x_{j,m_j}:=(\textup x_{j,m_j,1},\ldots,\textup x_{j,m_j,n})^\top$ whereas, in the presence of non-parametric effects, one can specify the penalty through the symmetric and positive semi-definite matrix
\begin{equation}
\overline{\mathbf S}_{j,m_j}:=\int_{V_{j,m_j}}\boldsymbol b_{j,m_j}''(\boldsymbol b_{j,m_j}'')^\top\textup d\textup v_{j,m_j},\nonumber
\end{equation}
a generic measure of the curvature of the estimated $(j,m_j)$-th smooth function (see \citealp{GreenSilverman_94} for a detailed introduction to this roughness penalty approach to curve estimation).

Furthermore, $\mathcal P_{j,m_j}$ is compatible with the specification of random effects models ($\overline{\mathbf S}_{j,m_j}=\mathbf I_q$), as well as with the definition of spatial covariate dependence. As recently outlined by \cite{KKKL_15} and \cite{MRBWM_15}, this approach can be employed wherever the phenomenon of interest to possesses characteristics that vary according to the geographical location of each individual. Let $r_i\in\mathcal R$, $\mathcal R:=\{1,\ldots,R\}$, be the region to which the $i$-th observation belongs, and define $\mathbf R$ the corresponding $n\times R$ design matrix of the spatial effects. This sets $\mathbf X_{j,m_j,i}\boldsymbol\beta_{j,m_j,i}\equiv\mathbf x_{r_i}^\top\boldsymbol\beta_{r_i}$ so that we estimate separate parameters $\boldsymbol\beta_1,\ldots,\boldsymbol\beta_R$ for each region, and $\mathbf R$ is an incidence matrix, namely an array such that $\mathbf R_{[i,r]}=1$ if observation $i$ belongs to $r\in\mathcal R$, and 0 otherwise. Then, for discrete spatial effects, a Markov random field (e.g.\ \citealp{RueHeld_05}) induces the penalty
\begin{equation}
\overline{\mathbf S}_{j,m_j}\equiv\overline{\mathbf S}_{[r,s]}=\left\{\begin{array}{cl}-1 & r\neq s\mbox{ }\wedge\mbox{ }s\in\delta_r\\
0 & r\neq s\mbox{ }\wedge\mbox{ }s\not\in\delta_r\\
N_r & r=s\end{array}\right.,\nonumber
\end{equation}
with $\delta_r$ denoting the set of neighbours of region $r$, and $N_r$ the number of regions in $\delta_r$.

Once every component in $\mathbf Z_j$ is endowed with a proper penalisation depending on the desired effect one is willing to model, and $\overline{\mathbf S}_{j,m_j}$ adequately adjusted to meet the splines' centering constraint, an overall penalty for the whole model can be constructed as $\mathcal P_{\boldsymbol\lambda}:=\boldsymbol\vartheta^\top\mathbf S_{\boldsymbol\lambda}\boldsymbol\vartheta$, where $\mathbf S_{\boldsymbol\lambda}$ corresponds to $\overline{\mathbf S}_{\boldsymbol\lambda}$ padded with zeros so that $\boldsymbol\vartheta^\top\mathbf S_{\boldsymbol\lambda}\boldsymbol\vartheta=\boldsymbol\beta^\top\overline{\mathbf S}_{\boldsymbol\lambda}\boldsymbol\beta$, $\overline{\mathbf S}_{\boldsymbol\lambda}:=\textup{diag}\{\overline{\mathbf S}_{j,m_j}\}_{m_j,j}$ and $\boldsymbol\lambda:=\{\lambda_{j,m_j}\}_{m_j,j}$. A penalized GLM is therefore defined as any model in the form of (\ref{eq:GAMdiscr.GLM2}) augmented with a non-zero penalty $\mathcal P_{\boldsymbol\lambda}$.

In the next section, we qualify the generic framework to describe a class of models widely used in applications, and we show how it can be represented within the $(r,F_J,\mathbf Z)$ frame. In this way, these models can be extended beyond the parametric specification of their functional form of covariate effects, and their estimation and inference will then be a direct consequence of those of a generic multivariate penalized GLM for discrete responses.

\subsection{Some Bivariate Models in the Class of Penalized GLMs}\label{sct:GAMdiscr.ConfoundingSS}
The analysis of observational data may be difficult as they often depart from the ideal conditions underlying any (also rather simple) regression model. They are commonly characterised by a lack of randomisation that may result either in a non-random selection of individuals in the sample, or even in the non-random allocation of a predictor of interest among the population (hence inducing a distorted association with the outcome). The former is commonly referred to non-random sample selection, and arises whenever individuals select themselves in or out of the relevant sample. It is often the case that some factors that determine the membership to the selected sample are also associated with those that determine the outcome itself. In the empirical illustration accompanying this work, and concerning the estimation of the HIV prevalence in Zambia, the refusal of people to be tested for the virus may be induced by factors associated to their HIV status. For example, they may already know or correctly predict their seropositivity and so fear others will learn about it if tested. The latter instance is regarded instead as a form of endogeneity as it is denominated in the econometric literature, and it may stem from different sources, including, but not limited to, the direct unmeasured confounding problem; \cite{Wooldridge_02} discusses this in detail as well as other generating sources of endogeneity, and we refer to him for a more thoughtful illustration of the topic. This situation arises whenever a common background variable affects simultaneously both the outcome of interest and one of its regressors, and it is not readily observable or quantifiable by the researcher. The affected covariate is then termed endogenous, and its effect on the outcome results confounded. A pedagogical example is the estimation of the effect of education on wages. Both the relevant variables in this study can be co-determined by factors such as personal ability and motivation that are likely to be explainable to both individual's level of education and salary, but hardly measurable (see for example \citealp{Imbens_14} for an interesting survey on the topic).

When not accounted for, non-random sample selection and endogeneity can both lead to inconsistent parameter estimates for the whole model. To deal with these issues, in some early works, \cite{Heckman_78, Heckman_79} devised a two-step estimation procedure for a prototypical recursive bivariate system of equations in a dichotomous responses setting, with $\boldsymbol Y=(Y_1,Y_2)^\top$ and $Y_2(Y_1)$. His proposals specified a binary rule for the observability of the outcome of interest, $Y_2$, for the non-random sample selection case, and related the conditional mean of the endogenous regressor, $Y_1$, to various other predictors when endogeneity is suspected. In either scenario, identification of the true association between the elements of $\boldsymbol Y$ would require to be able to qualify the dependence of $Y_1$ on a relevant variable which is assumed to be independent of both $Y_2|Y_1$ and the unmeasured confounder(s).

\paragraph{Unmeasured Confounding}
We consider the case where both the responses are discrete, and we specify the following triangular generating structure for the $j$-th categorical response in terms of a latent variable formulation, as of Case 3 in \cite{Heckman_78},
\begin{equation}
Y_j^*=\mathds 1_{j=2}\{\psi Y_1^*\}+\mathbf x_{j,1}^\top\boldsymbol\beta_{j,1}+\cdots+\mathbf x_{j,M_j}^\top\boldsymbol\beta_{j,M_j}+\epsilon_j\mbox{ }\mbox{ }\mbox{ }\mbox{ }\mbox{ }(\epsilon_1,\epsilon_2)^\top\sim\mathcal N_2(\boldsymbol0,\boldsymbol\Omega;\rho),\label{eq:GAMdiscr.triangular}
\end{equation}
where $\mathbf x_{j,m_j}^\top$ denotes the $(j,m_j)$-th row of $\mathbf X_{j,m_j}$. The given distributional assumption is in line with the considerations of \cite{GreeneHensher_10} and with the current practice in multivariate discrete response modelling. In particular, we set $\boldsymbol\Omega:=\mathbb V\textup{ar}[\boldsymbol\epsilon]$ to have unit main diagonal elements for identifiability purposes and correlation coefficient $\rho$. Let us next define the following quantities:
\begin{equation}
\boldsymbol\Gamma:=\left[\begin{array}{cc}1 & 0\\ -\psi & 1\end{array}\right]\mbox{ }\mbox{ }\mbox{ }\mbox{ and }\mbox{ }\mbox{ }\mbox{ }\mathbf L:=\left[\begin{array}{cc}1 & 0\\ 0 & \sqrt{1+2\psi\rho+\psi^2}\end{array}\right],\nonumber
\end{equation}
where $\mathbf L$ is a lower triangular, positive definite matrix since $(1+2\psi\rho+\psi^2)=(1-\rho^2)+(\rho+\psi)^2>0$, and with $\mathbf L\boldsymbol Y^*=\mathbf L\boldsymbol\Gamma^{-1}(\mathbf X\boldsymbol\beta+\boldsymbol\epsilon)$ distributed as Standard Normal random vector with covariance matrix $\mathbf\Sigma=\mathbf L\boldsymbol\Gamma^{-1}\boldsymbol\Omega\boldsymbol\Gamma^{-1}\mathbf L^\top$. This structure constitutes the most general one we discuss in this paper, as it nests the vast majority of the other model specifications for unmeasured confounding and non-random sample selection currently proposed in the literature. Notice that the manifest polychotomous ordinal responses can now be recovered from (\ref{eq:GAMdiscr.manifestLatent}) via the series of equivalences
\begin{equation}
\{\boldsymbol Y\preceq k\}\Longleftrightarrow\{\boldsymbol\Gamma\boldsymbol Y^*\le\boldsymbol c_k\}\Longleftrightarrow\{\mathbf L\boldsymbol\Gamma^{-1}\boldsymbol\epsilon\le\mathbf L\boldsymbol\Gamma^{-1}\mathbf Z\boldsymbol\beta\},\nonumber
\end{equation}
which implies that the predictor $\mathbf L\boldsymbol\Gamma^{-1}\boldsymbol\eta_k$ is non-linear in $\boldsymbol\vartheta$, an occurrence that has to be accounted for in the derivation of the estimation algorithm. The corresponding $(r,F_2,\mathbf Z)$ definition of the triangular form (\ref{eq:GAMdiscr.triangular}) for every $k\in\mathcal K$ is then
\begin{equation}
\left(\sum_{\bar k_1\le k_1}\sum_{\bar k_2\le k_2}\pi_{\bar k_1,\bar k_2},\Phi_2(\eta_{1,k_1},\eta_{2,k_2};\boldsymbol\Sigma),\mathbf L\boldsymbol\Gamma^{-1}\mathbf Z_i\right),\label{eq:GAMdiscr.rFZtriang}
\end{equation}
where $\mathbf Z$ is defined as in (\ref{eq:GAMdiscr.predOrdinal}) and, in the case of $\mathcal K:=\{0,1\}\times\{0,1\}$, the first entry of the triplet becomes $r(\pi_k)=\pi_{k_1,k_2}$ while the design matrix has to be modified accordingly in order to accommodate the instance $c_{j,0}=0$ for any $j$. This representation of the recursive structure allows explicitly for a latent endogenous predictor to be a determinant of the intentions about $Y_2$. That is, if one interprets the intentions towards a manifest discrete outcome (the actual action) as the result of an underlying choice mechanism as described by $Y_j^*$, then (\ref{eq:GAMdiscr.rFZtriang}) really assumes the existence of unobservables that influence simultaneously the intentions about the components of $\boldsymbol Y$. For example, there is a vast economic literature pointing out that the choice of investing in both health and education are confounded by individual time preferences, in the sense that people with low (high) rates of time preference are more (less) likely to decide to invest in both schooling and health (\citealp{Sander_95}, \citealp{Fuchs_82}, \citealp{VanderPol_11}). In this case, endogeneity is regarded to act at the level of the choice of how much to invest in education and future health status. A researcher could nonetheless be interested in modelling the effects of an observed endogenous variable (where the intentions have been revealed by the actual choices undertaken) on the discrete response $Y_2$: this instance then specifies $\boldsymbol\Gamma=\mathbf L=\mathbf I_2$ and $\mathbf Z$ is assumed to include the levels of the manifest $Y_1$. A discussion about the distinctive features of these different modelling strategies can be found in \cite{Vossmeyer_14}, who also introduced a formal Bayesian model comparison framework to test these two competing models against the observed data.

Models for a mixture of dichotomous and ordinal polychotomous responses can also be reconciled within this representation by giving a proper definition of $r(\pi_k)$. In fact, this is the only element which is directly affected by the types of responses considered, whereas the design matrix mainly attains the functional form assumed for the predictors, and $F_2$ specifies the link function. To describe this situation, we may think of $\boldsymbol r$ as the composition function $\boldsymbol r=\boldsymbol r_1\circ\boldsymbol r_2$, where the subscripts correspond to the elements of the $2$-dimensional vector $\boldsymbol Y$ they refer to. Therefore, by setting
\begin{equation}
r_j(\pi_k)=\pi_{k_j,k_{\bar\jmath}}\mbox{ }\mbox{ }\mbox{ }\mbox{ and }\mbox{ }\mbox{ }\mbox{ }r_{\bar\jmath}(\pi_k)=\sum_{\tilde k_{\bar\jmath}\le k_{\bar\jmath}}\pi_{k_j,\tilde k_{\bar\jmath}},\nonumber
\end{equation}
we have $r_{\bar\jmath}(r_j(\pi_k))=r_{\bar\jmath}(\pi_k)$ for any $\bar k,k\in\mathcal K$ and $\bar\jmath:=\mathcal J\setminus\{j\}$. The result follows because, for dichotomous responses, $r_j$ is the identity map, meaning that it is also indifferent the order in which the types of the outcomes appear in $\boldsymbol Y$ in terms of the model representation. In other words, irrespectively from whether $Y_1$ or $Y_2$ is the dichotomous variable, the function composition $(r_j\circ r_{\bar\jmath})$ is commutative:
\begin{equation}
(r_{\bar\jmath}\circ r_j)(\pi_k)=r_{\bar\jmath}(\pi_k)=r_j(r_{\bar\jmath}(\pi_k))=(r_j\circ r_{\bar\jmath})(\pi_k).\nonumber
\end{equation}

\paragraph{Non-random Sample Selection}
In this case, it is assumed that the outcome $Y_2\in\mathcal K_2$ is observed if and only if $\{0,1\}\ni Y_1=1$, whereas it is labelled as missing otherwise. As a consequence, the vector $\boldsymbol\pi$ results further constrained: every element in the form of $\pi_{0,k_2}$ is now not a sensible quantity in the model for any $k_2$, since it refers to a missing value in the realisation of $Y_2$. Hence, one can only describe the corresponding marginal probability $\pi_{0\cdot}$, which is translated mathematically into the map $\mathcal M\longrightarrow\mathcal M^s$ defined by $\pi_{0,k_2}\mapsto\sum_{\tilde k_2\in\mathcal K_2}\pi_{0,\tilde k_2}=:\pi_{0\cdot}$, and $\pi_{1,k_2}\mapsto\pi_{1,k_2}$ for any $k_2\in\mathcal K_2\setminus\{K_2\}$, where $\mathcal M^s:=\{\boldsymbol\pi^s\in(0,1)^{K_2}|\boldsymbol 1^\top\boldsymbol\pi^s<1\}$. Notice that, in complete analogy with the general case, if $\boldsymbol\pi^s$ is augmented with $\pi_{1,K_2}$, the components of the resulting vector will sum up to the unity. Hence, the $(r,F_2,\mathbf Z)$ representation of this generic sample selection model would require just to exploit the corresponding function $\boldsymbol r$ as depending on the type of the response $Y_2$. In particular, for a dichotomous response $Y_2$,
\begin{equation}
\boldsymbol r(\boldsymbol\pi^s)=(\pi_{0\cdot},\pi_{1,0})^\top=(\Phi_2(-\eta_1,\infty;\boldsymbol\Omega),\Phi_2(\eta_1,-\eta_2;\boldsymbol\Omega))^\top=\mathcal F(\boldsymbol\eta),\label{eq:GAMdiscr.ssBinary}
\end{equation}
where $\boldsymbol\Omega$ stems from $\boldsymbol\Sigma$ upon imposing the restrictions $\boldsymbol\Gamma=\mathbf L=\mathbf I_2$, and $\Phi_2(-\eta_1,\infty)=\Phi_2(-\eta_1,-\eta_2)+(\Phi_2(-\eta_1,\eta_2)-\Phi_2(-\eta_1,-\eta_2))=\Phi(-\eta_1)$ as from the map characterising the sample selection problem. This bivariate probit model was originally proposed by \cite{Heckman_79} and subsequently extended to encompass penalized regression splines by \cite{MarraRadice_13}. As a natural generalisation, one can also consider the support of $\mathcal K_2$ to be totally ordered, $\#(\mathcal K_2)>2$ like in \cite{MirandaRabeHesketh_06}, whose corresponding representation within our framework comprises
\begin{equation}
\boldsymbol r(\boldsymbol\pi)=\left(\pi_{0\cdot},\pi_{1,1},\ldots,\pi_{1,1}+\cdots+\pi_{1,K_2-1}\right)^\top,\mbox{ and}\nonumber
\end{equation}
\begin{equation}
\mathcal F(\boldsymbol\eta)=\left(\Phi(-\eta_1),r^{-1}(\Phi_2(\eta_{1,1},\eta_{2,1};\boldsymbol\Omega)),\ldots,r^{-1}(\Phi_2(\eta_{1,1},\eta_{2,K_2-1};\boldsymbol\Omega))\right)^\top,\nonumber
\end{equation}
where the generic $r^{-1}(F_2(\eta_{1,k_1},\eta_{2,k_2}))$ can be computed as the non-negative volume of the rectangles $[\eta_{1,k_1-1},\eta_{1,k_1}]\times[\eta_{2,k_2-1},\eta_{2,k_2}]$ in $\mathds R^2$.

\section{Estimation}
Let the conditional distribution of $(\boldsymbol Y|\boldsymbol X=\boldsymbol x)$ obey a Categorical distribution $\mathcal C(\boldsymbol\pi(\boldsymbol x))$ with mass function 
\begin{equation}
f_{\boldsymbol Y|\boldsymbol X}(\boldsymbol y|\boldsymbol x)=\prod_{k\in\mathcal K}\pi_k(\boldsymbol x)^{\mathds1_{\boldsymbol y=k}},\label{eq:GAMdiscr.density}
\end{equation}
where $\mathds 1_{\boldsymbol y=k}$ is a Boolean function that takes value 1 if $(y_1=k_1\wedge\cdots\wedge y_J=k_J)$ and 0 otherwise. Then, after having re-defined the response vector $\bar{\boldsymbol y}=(\mathds1_{\boldsymbol y=1},\ldots,\mathds1_{\boldsymbol y=K})^\top$, the distribution (\ref{eq:GAMdiscr.density}) can be written as
\begin{equation}
f_{\boldsymbol Y|\boldsymbol X}(\bar{\boldsymbol y}|\boldsymbol x)=\textup{exp}\left\{\bar{\boldsymbol y}^\top\boldsymbol\theta-b(\boldsymbol\theta)\right\},\nonumber
\end{equation}
where
\begin{equation}
\theta_k=\textup{ln}\left\{\frac{\pi_k}{1-\sum_k\pi_k}\right\},\mbox{ }\theta_K=0\mbox{ }\mbox{ and }\mbox{ }b(\boldsymbol\theta)=\textup{ln}\left\{1+\sum_k\textup{exp}\{\theta_k\}\right\},\nonumber
\end{equation}
which shows that $\mathcal C(\boldsymbol\pi(\boldsymbol x))$ can be expressed in the exponential form, and hence all the standard properties implied by this family of distributions follow. If we further take $\{(\boldsymbol x_i,\boldsymbol y_i)\}_{i=1}^n$ a sample from $F_{\boldsymbol Y}$ and $F_{\boldsymbol X}$, where the $\boldsymbol y_i$'s are assumed conditionally independent given the regressors, then equation (\ref{eq:GAMdiscr.density}) can also be used to derive the log-likelihood function of any multivariate model for discrete data admitting a $(r,F_J,\mathbf Z)$ form. Specifically, by denoting $\ell_i(\boldsymbol\vartheta)$ the contribution of the $i$-th observation to the log-likelihood, the iterative application of the chain rule results in
\begin{equation}
\nabla_{\boldsymbol\vartheta}\ell_i=\frac{\partial\boldsymbol\eta_k}{\partial\boldsymbol\vartheta}\left(\frac{\partial\mathcal F_k}{\partial\boldsymbol\eta_k}\frac{\partial\pi_k}{\partial\boldsymbol r_k}\frac{\partial\theta_k}{\partial\pi_k}\frac{\partial\ell_i}{\partial\theta_k}\right)=\mathbf D_i^\top\mathbf u_i\mbox{ }\mbox{ }\mbox{ }\mbox{ }\mbox{ }\mbox{and}\mbox{ }\mbox{ }\mbox{ }\mbox{ }\mbox{ }\nabla_{\boldsymbol\vartheta\boldsymbol\vartheta^\top}\ell_i=\mathbf D_i^\top\mathbf W_i\mathbf D_i+\mathbf K_i,\label{eq:GAMdiscr.grHess}
\end{equation}
where
\begin{equation}
\mathbf W_i=\frac{\partial^2\mathcal F_k}{\partial\boldsymbol\eta_k\partial\boldsymbol\eta_k^\top}\frac{\partial\pi_k}{\partial\boldsymbol r_k}\frac{\partial\theta_k}{\partial\pi_k}\frac{\partial\ell_i}{\partial\theta_k}+\frac{\partial\mathcal F_k}{\partial\boldsymbol\eta_k}\frac{\partial\pi_k}{\partial\boldsymbol r_k}\left(\frac{\partial\mathcal F_k}{\partial\boldsymbol\eta_k}\frac{\partial\pi_k}{\partial\boldsymbol r_k}\right)^\top\left\{\frac{\partial^2\theta_k}{\partial\pi_k^2}\frac{\partial\ell_i}{\partial\theta_k}+\left(\frac{\partial\theta_k}{\partial\pi_k}\right)^2\frac{\partial^2\ell_i}{\partial\theta_k^2}\right\}\nonumber
\end{equation}
and
\begin{equation}
\mathbf K_i=\frac{\partial^2\boldsymbol\eta_k}{\partial\boldsymbol\vartheta\partial\boldsymbol\vartheta^\top}\frac{\partial\mathcal F_k}{\partial\boldsymbol\eta_k}\frac{\partial\pi_k}{\partial\boldsymbol r_k}\frac{\partial\theta_k}{\partial\pi_k}\frac{\partial\ell_i}{\partial\theta_k}\nonumber.
\end{equation}
These expressions are analogous to those derived by \cite{Green_84} in the context of iterative re-weighted least squares (IRLS) estimation of likelihood functions. Indeed, the baseline model is rather similar, with the sole relevant difference being the acknowledgment that only in some special cases $\boldsymbol r(\boldsymbol\pi)=\boldsymbol\pi$. In particular, wherever $\boldsymbol r$ is the identity map, $\widetilde{\mathbf u}=\textup{vec}(\mathbf u_1,\ldots,\mathbf u_n)$ reduces to the same simplified expression, $\partial\ell_i/\partial\boldsymbol\eta_k$, that appears in \cite{Green_84}. Factor $\mathbf K_i$ is somehow unusual, and generally it is not reported in the relevant literature on GLMs. In fact, it is structurally equal to $\boldsymbol 0_{p,p}$ wherever each $\boldsymbol\eta_k$ is linear in the parameter vector; however, this may not be true in some instances as shown, for example, in the triangular systems of equations having representation (\ref{eq:GAMdiscr.rFZtriang}). The Information Matrix can also be derived: recalling that, for the exponential family of distributions, $\partial\ell_i/\partial\theta_k=\bar y_k-\pi_k$, $b'(\theta_k)=\pi_k$ and $\partial\theta_k/\partial\pi_k=[b''(\theta_k)]^{-1}=\mathbb V\textup{ar}[\bar y_k]^{-1}=[\pi_k(1-\pi_k)]^{-1}$, we have $\mathbb E[\mathbf K_i]=\boldsymbol 0_{p,p}$, while
\begin{equation}
\mathbb E[\mathbf W_i]=-\frac{\partial\mathcal F_k}{\partial\boldsymbol\eta_k}\frac{\partial\pi_k}{\partial\boldsymbol r_k}\left(\frac{\partial\mathcal F_k}{\partial\boldsymbol\eta_k}\frac{\partial\pi_k}{\partial\boldsymbol r_k}\right)^\top\frac{1}{\pi_k(1-\pi_k)}=:-\overline{\mathbf W}_i,\nonumber
\end{equation}
so that the Fisher Information component corresponding to the $i$-th observation is given by
\begin{equation}
\mathcal I_i=-\mathbb E[\nabla_{\boldsymbol\vartheta\boldsymbol\vartheta^\top}\ell_i]=\mathbf D_i^\top\overline{\mathbf W}_i\mathbf D_i.\nonumber
\end{equation}
Each individual matrix is finally aggregated into appropriate arrays to get a global representation of score and Hessian as follows: $\mathbf D:=(\mathbf D_1^\top|\cdots|\mathbf D_n^\top|\mathbf I_p)^\top$, $\mathbf u:=\textup{vec}(\mathbf u_1,\ldots,\mathbf u_n,\boldsymbol0_p)$, $-\mathbf W:=\textup{diag}(\mathbf W_1,\ldots,\mathbf W_n,\mathbf K)$, with $\mathbf K:=\sum_i\mathbf K_i$, so that $\nabla_{\boldsymbol\vartheta}\ell(\boldsymbol\vartheta)=\mathbf D^\top\mathbf u$, and $\nabla_{\boldsymbol\vartheta\boldsymbol\vartheta^\top}\ell(\boldsymbol\vartheta)=-\mathbf D^\top\mathbf W\mathbf D$.

\subsection{Penalized Likelihood}
The quantities derived above have been obtained only by the knowledge of the $(r,F_J,\mathbf Z)$ representation of the model that, alongside with the panalty matrix $\mathbf S_{\boldsymbol\lambda}$, embodies all the information needed to achieve estimation. Recall that any covariate effect other than a purely parametric specification requires the exploitation of certain features as included in the penalisation term $\mathcal P_{\boldsymbol\lambda}$. To account for them, a Penalized Likelihood (PL) is usually set up for estimation, and the corresponding MPLE is then defined as solution of the following optimisation problem
\begin{equation}
\widehat{\boldsymbol\vartheta}:=\operatornamewithlimits{\arg\max}_{\boldsymbol\vartheta\in\Theta}\ell_{\textup p}(\boldsymbol\vartheta,\boldsymbol\lambda)=\left\{\sum_{i=1}^n\ell_i(\boldsymbol\eta_k(\boldsymbol\vartheta))-\frac{1}{2}\boldsymbol\vartheta^\top\mathbf S_{\boldsymbol\lambda}\boldsymbol\vartheta\right\},\label{eq:GAMdiscr.MPLEdef}
\end{equation}
which is obtained from any fixed value of the smoothing paramter vector $\boldsymbol\lambda$. Because the quadratic form $\mathcal P_{\boldsymbol\lambda}$ is positive semi-definite by construction, the joint estimation of $(\boldsymbol\vartheta,\boldsymbol\lambda)$ would clearly lead to over-fitting since an optimal value for $\ell_{\textup p}(\boldsymbol\vartheta)$ would be reached at a state where $\widehat{\boldsymbol\lambda}=\boldsymbol0$. Our estimation strategy comprises therefore two alternating steps based on the outer iteration scheme originally proposed by \cite{OSYR_86}. Specifically, an estimate $(\widehat{\boldsymbol\vartheta}|\boldsymbol\lambda=\boldsymbol\lambda')$ is first obtained from any value $\boldsymbol\lambda'$ via the maximisation of $\ell_{\textup p}(\boldsymbol\vartheta|\boldsymbol\lambda=\boldsymbol\lambda')$, which is then used to update a value of the tuning parameter vector. The whole procedure is iterated until convergence.

Although a solution to problem (\ref{eq:GAMdiscr.MPLEdef}) can in principle be obtained through any numerical optimisation algorithm, our subsequent analysis requires some of its iterations to be either of Newton-Raphson or of Fisher scoring-type to match with the derivation of the smoothing parameters vector.

\paragraph{P-IRLS Scheme for Estimation}
Rather than handling the log-likelihood maximisation directly, it is convenient to define a penalized iteratively re-weighted least squares (P-IRLS) scheme based on quantities (\ref{eq:GAMdiscr.grHess}). Let us first derive the Taylor series approximation of the function $\nabla_{\boldsymbol\vartheta}\ell_{\textup p,i}$ about the vector $\widehat{\boldsymbol\vartheta}=\boldsymbol\vartheta-\boldsymbol\vartheta_0$, 
\begin{equation}
\nabla_{\widehat{\boldsymbol\vartheta}}\ell_{\textup p,i}\approx\nabla_{\boldsymbol\vartheta_0}\ell_{\textup p,i}+\nabla_{\boldsymbol\vartheta_0\boldsymbol\vartheta_0^\top}\ell_{\textup p,i}(\widehat{\boldsymbol\vartheta}-\boldsymbol\vartheta_0)=\boldsymbol0_p,\nonumber
\end{equation}
where the last equality holds from $\widehat{\boldsymbol\vartheta}$ being the MPLE, and with $\nabla_{\widehat{\boldsymbol\vartheta}}\ell_{\textup p,i}$ standing for $\nabla_{\boldsymbol\vartheta}\ell_{\textup p,i}|_{\boldsymbol\vartheta=\widehat{\boldsymbol\vartheta}}$. Under the assumptions that $\mathbf D$ has full rank $p$, and $\mathbf W$ is positive definite throughout the parameter space $\Theta$, a Newton-Raphson algorithm comprises the non-singular $p\times p$ system of equations for $\boldsymbol\vartheta$
\begin{eqnarray}
(\nabla_{\boldsymbol\vartheta\boldsymbol\vartheta^\top}\ell(\boldsymbol\vartheta^{[\alpha]})-\mathbf S_{\boldsymbol\lambda})\boldsymbol\vartheta^{[\alpha+1]}&=&(\nabla_{\boldsymbol\vartheta\boldsymbol\vartheta^\top}\ell(\boldsymbol\vartheta^{[\alpha]})-\mathbf S_{\boldsymbol\lambda})\boldsymbol\vartheta^{[\alpha]}+(\mathbf S_{\boldsymbol\lambda}\boldsymbol\vartheta^{[\alpha]}-\nabla_{\boldsymbol\vartheta}\ell(\boldsymbol\vartheta^{[\alpha]}))\nonumber\\
(\mathbf D^\top\mathbf W\mathbf D+\mathbf S_{\boldsymbol\lambda})\boldsymbol\vartheta^*&=&\mathbf D^\top\mathbf W\mathbf z,\label{eq:GAMdiscr.IRLS}
\end{eqnarray}
where $\mathbf z:=\mathbf D\boldsymbol\vartheta+\mathbf W^{-1}\mathbf u$ defines the pseudo-data vector associated with any $(r,F_J,\mathbf Z)$ model. Moreover, equation (\ref{eq:GAMdiscr.IRLS}) is expressed in terms of $\boldsymbol\vartheta^*:=\boldsymbol\vartheta^{[\alpha+1]}$, while the dependence of all the other variables on the $[\alpha]$-th iteration is neglected to avoid clutter in the notation. Finally, by noticing that the above system can be recovered directly from the normal equations of a Generalized Least Squares (GLS) regression of $\mathbf z$ onto the columns of $\mathbf D$, using a weight matrix $\mathbf W$ and a ridge-type penalisation, it follows that (\ref{eq:GAMdiscr.IRLS}) corresponds to the closed-form solution of the problem
\begin{equation}
\boldsymbol\vartheta^*=\operatornamewithlimits{\arg\min}_{\boldsymbol t\in\Theta}\big\|\sqrt{\mathbf W}(\mathbf z-\mathbf D\boldsymbol t)\big\|^2+\boldsymbol t^\top\mathbf S_{\boldsymbol\lambda}\boldsymbol t.\nonumber
\end{equation}
In other words, at every $[\alpha]$-th iteration, the GLS recursion produces a closed-form expression to update the optimisation algorithm, and this is repeated until convergence. Apart from giving an elegant solution to the log-likelihood maximisation problem, the P-IRLS algorithm also establishes a correspondence between MPLE and GLS, and this provides us with an equivalent expression smoothing parameter selection can be based on.

\begin{remark}
The use of matrix $\mathbf D$ in the computations above reflects the possibility of dealing with models involving non-linear predictors. In other simpler instances, this quantity reduces to the design matrix $\mathbf Z$, with potential gains in the computational time of the P-IRLS procedure. In fact, $\mathbf D$ would usually depend on some functions of the parameter vector which need to be updated at every iteration; whereas, in the case of $\mathbf D\equiv\mathbf Z$, this quantity can be stored outside the iteration loop.   
\end{remark}

\subsection{Smoothing Parameter Selection}\label{sct:GAMdiscr.smoothsel}      
The correct specification of the ``right'' amount of smoothness is important for any practical modelling in non-parametric regression. In what follows, we adapt the Un-biased Risk Estimator (UBRE; e.g.\ \citealp{Wood_06}) to the present context, so that smoothness selection is achieved from quantities that are directly stemming from the $(r,F_J,\mathbf Z)$ representation of the model; a stable and efficient computational method to implement this criterion is discussed in \cite{Wood_04}.

In principle, vector $\boldsymbol\lambda$ should be estimated in such a way that the fitted curves are as close as possible to the true unknown functions. To this end, let us consider the large sample approximation $-\nabla_{\boldsymbol\vartheta\boldsymbol\vartheta^\top}\ell(\boldsymbol\vartheta)\stackrel{p}{\longrightarrow}\mathcal I$ implied by the likelihood model and, under the regularity conditions listed in Section \ref{sct:GAMdiscr.asymptotics}, it follows $\mathbf D^\top\mathbf W\mathbf D\stackrel{p}{\longrightarrow}\mathbf D^\top\overline{\mathbf W}\mathbf D$, where $\overline{\mathbf W}:=\textup{diag}(\overline{\mathbf W}_1,\ldots,\overline{\mathbf W}_n,\boldsymbol0_{p,p})$. Since $\mathbb E[\mathbf K_i]=\boldsymbol0_{p,p}$, it also holds that $\mathbb E[\mathbf K]=\boldsymbol0_{p,p}$ from the linearity of the expectation, hence $\mathbf K=o_p(1)$ and $\mathbf W=\overline{\mathbf W}+o_p(1)$ as $n\rightarrow\infty$. Further let 
\begin{equation}
\mathbf P=\sqrt{\overline{\mathbf W}}\mathbf D(\mathbf D^\top\overline{\mathbf W}\mathbf D+\mathbf S_{\boldsymbol\lambda})^{-1}\mathbf D^\top\sqrt{\overline{\mathbf W}}\nonumber
\end{equation}
denote the influence matrix of the associated GLS model, namely the array such that the predicted values of the response $\sqrt{\overline{\mathbf W}}\overline{\mathbf z}$ can be written as $\widehat{\boldsymbol\mu}:=\sqrt{\overline{\mathbf W}}\mathbf D\boldsymbol\vartheta^*=\mathbf P\sqrt{\overline{\mathbf W}}\overline{\mathbf z}$, and $\overline{\mathbf z}:=\mathbf z(\overline{\mathbf W})$ be the pseudo-data vector evaluated at the asymptotic weight matrix. Then, by letting $\boldsymbol\mu:=\mathbb E[\sqrt{\overline{\mathbf W}}\overline{\mathbf z}]=\sqrt{\overline{\mathbf W}}\mathbf D\boldsymbol\vartheta$ be the expected value of the GLS response, we define $\widehat{\boldsymbol\lambda}$ as the minimiser of the expected Mean Squared Error (MSE) of $\widehat{\boldsymbol\mu}$. Namely 
\begin{eqnarray}
\widetilde n^{-1}\mathbb E\|\boldsymbol\mu-\widehat{\boldsymbol\mu}\|^2&=&\widetilde n^{-1} \mathbb E\big\|\sqrt{\overline{\mathbf W}}\overline{\mathbf z}-\mathbf P\sqrt{\overline{\mathbf W}}\overline{\mathbf z}-\boldsymbol\varepsilon\big\|^2\nonumber\\
&=&\widetilde n^{-1}\mathbb E\left[\big\|\sqrt{\overline{\mathbf W}}(\overline{\mathbf z}-\mathbf D\boldsymbol\vartheta^*)\big\|^2+\|\boldsymbol\varepsilon\|^2-2\big\langle\sqrt{\overline{\mathbf W}}\overline{\mathbf z}-\mathbf P\sqrt{\overline{\mathbf W}}\overline{\mathbf z};\boldsymbol\varepsilon\big\rangle\right],\nonumber
\end{eqnarray}
where the stochastic term above is given by $\overline{\mathbf W}^{-1/2}\mathbf u=:\boldsymbol\varepsilon\sim(\boldsymbol0_{\widetilde n},\mathbf I_{\widetilde n})$ since $\mathbb E[\mathbf u]=\boldsymbol0_p$ and $\mathbb{V}\textup{ar}[\mathbf u]=\overline{\mathbf W}$, while the expectation of the inner product results in
\begin{eqnarray}
-2\widetilde n^{-1}\mathbb E\big\langle\sqrt{\overline{\mathbf W}}\overline{\mathbf z}-\mathbf P\sqrt{\overline{\mathbf W}}\overline{\mathbf z};\boldsymbol\varepsilon\big\rangle&=&-2\widetilde n^{-1}\mathbb E\left[\boldsymbol\varepsilon^\top(\mathbf I_{\widetilde n}-\mathbf P)(\sqrt{\overline{\mathbf W}}\mathbf D\boldsymbol\vartheta+\boldsymbol\varepsilon)\right]\nonumber\\
&=&-2+2\widetilde n^{-1}\mathbb E[\boldsymbol\varepsilon^\top\mathbf P\boldsymbol\varepsilon]=-2+2\widetilde n^{-1}\textup{tr}(\mathbf P).\nonumber
\end{eqnarray}
Then, the corresponding UBRE criterion for the $[\alpha+1]$-th iteration step reads as
\begin{equation}
\boldsymbol\lambda^{[\alpha+1]}:=\operatornamewithlimits{\arg\min}_{\boldsymbol\lambda}\mathcal V_u(\boldsymbol\lambda):=\big\|\sqrt{\overline{\mathbf W}}^{[\alpha+1]}(\overline{\mathbf z}^{[\alpha+1]}-\mathbf D^{[\alpha+1]}\boldsymbol\vartheta^{[\alpha+1]})\big\|^2/\widetilde n-1+2\kappa\textup{tr}(\mathbf P^{[\alpha+1]})/\widetilde n,\label{eq:GAMdiscr.UBRE}
\end{equation}
where $\widetilde n$ is a given multiple of the sample size as determined by the dimension of $\boldsymbol\eta_k$, and accounts for the multivariate nature of the framework. For instance, if we consider a bivariate model where both the responses are ordinal polychotomous with just one association paramter $\gamma$, it follows that, for every individual $i$, the corresponding array
\begin{equation}
\boldsymbol\eta_k=(\eta_{k_1-1,k_2-1},\eta_{k_1-1,k_2},\eta_{k_1,k_2-1},\eta_{k_1,k_2},\gamma)^\top\nonumber
\end{equation}
is 5-dimensional and $\widetilde n=5n+p$. An additional inflation parameter $\kappa$ has been included in the UBRE criterion and it can be increased from its usual value of 1 in order to obtain smoother models. In effect, based on experimental results, \cite{KimGu_04} suggested to locate $\kappa\in[1.2,1.4]$ to correct the tendency of prediction error criteria to over-fit the estimated curves. Notice that the trace of the influence matrix $\mathbf P$ represents the effective degrees of freedom of the model; in a penalised framework, they usually differ from the number of parametric model components since the presence of the penalisation in the fitting algorithm can suppress some dimensions of the parameter space. As a final remark, expression (\ref{eq:GAMdiscr.UBRE}) can also be interpreted in term of the log-likelihood AIC:

\begin{prop}\label{prop:GAMdiscr.AIC}
Let $\ell(\boldsymbol\vartheta)$ be the log-likelihood function of a model admitting a ridge-type penalized GLM form $(r,F_J,\mathbf Z)$ and penalty matrix $\mathbf S_{\boldsymbol\lambda}$, then the UBRE expression (\ref{eq:GAMdiscr.UBRE}), $\mathcal V_u(\boldsymbol\lambda)$, is proportional to the Akaike Information Criterion (AIC) with the parameter space dimensionality corrected for the presence of $\mathbf S_{\boldsymbol\lambda}$, namely
\begin{equation}
\mathcal V_u(\boldsymbol\lambda)\propto2\textup{tr}(\mathbf P)-2\ell(\boldsymbol\vartheta^*).\nonumber
\end{equation}
\end{prop}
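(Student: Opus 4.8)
The plan is to show that the weighted working residual sum of squares entering (\ref{eq:GAMdiscr.UBRE}) equals $-2\ell(\boldsymbol\vartheta^*)$ up to a quantity that is free of $\boldsymbol\lambda$; then, at the usual value $\kappa=1$ of the inflation factor, discarding the positive scale $\widetilde n$ together with the $\boldsymbol\lambda$-independent additive constants turns (\ref{eq:GAMdiscr.UBRE}) into $2\,\textup{tr}(\mathbf P)-2\ell(\boldsymbol\vartheta^*)$. Reading ``$\propto$'' as equality up to an affine transformation whose coefficients do not depend on $\boldsymbol\lambda$ --- so that, in particular, the minimiser is preserved --- this is the claim, and the substitution of $\textup{tr}(\mathbf P)$, the effective degrees of freedom of the penalised fit, for the nominal parameter count is exactly the dimensionality correction induced by $\mathbf S_{\boldsymbol\lambda}$. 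Almost all of the machinery is already in hand: the P-IRLS/GLS correspondence (\ref{eq:GAMdiscr.IRLS}), the limiting relations $\mathbf W=\overline{\mathbf W}+o_p(1)$, $\mathbf K=o_p(1)$, and the exponential-family identities $\partial\ell_i/\partial\theta_k=\bar y_k-\pi_k$, $\partial\theta_k/\partial\pi_k=[\pi_k(1-\pi_k)]^{-1}$.

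First I would evaluate $\textup{RSS}_W:=\big\|\sqrt{\overline{\mathbf W}}(\overline{\mathbf z}-\mathbf D\boldsymbol\vartheta^*)\big\|^2$ at convergence of the P-IRLS scheme. Since there $\overline{\mathbf z}=\mathbf D\boldsymbol\vartheta^*+\overline{\mathbf W}^{-1}\mathbf u$, one has $\overline{\mathbf z}-\mathbf D\boldsymbol\vartheta^*=\overline{\mathbf W}^{-1}\mathbf u$ and hence $\textup{RSS}_W=\mathbf u^\top\overline{\mathbf W}^{-1}\mathbf u=\sum_i\mathbf u_i^\top\overline{\mathbf W}_i^{-1}\mathbf u_i$, a quadratic form in the per-observation scores; feeding in the factorisations of $\mathbf u_i$ and of $\overline{\mathbf W}_i=-\mathbb E[\mathbf W_i]$ through their common Jacobian $(\partial\mathcal F_k/\partial\boldsymbol\eta_k)(\partial\pi_k/\partial\boldsymbol r_k)$ exposes $\textup{RSS}_W$ as a goodness-of-fit statistic of Pearson/score type --- a quadratic form in the raw residuals $\bar{\boldsymbol y}-\widehat{\boldsymbol\pi}$, equal in the canonical case to $\sum_i\sum_k(\bar y_{ik}-\widehat\pi_{ik})^2/[\widehat\pi_{ik}(1-\widehat\pi_{ik})]$. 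By the classical asymptotic equivalence between the score/Pearson statistic and the model deviance --- a second-order Taylor expansion of $\ell$ about the saturated fit, at which $\nabla_{\boldsymbol\theta}\ell=\boldsymbol0$ and $-\nabla^2_{\boldsymbol\theta}\ell=\mathbb V\textup{ar}[\bar{\boldsymbol y}]$ --- it follows that $\textup{RSS}_W=2\{\ell_{\textup{sat}}-\ell(\boldsymbol\vartheta^*)\}+o_p(1)$, and for a categorical response $\ell_{\textup{sat}}=\sum_i\sum_k\bar y_{ik}\ln\bar y_{ik}=0$, so $\textup{RSS}_W=-2\ell(\boldsymbol\vartheta^*)+o_p(1)$.

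Substituting this into (\ref{eq:GAMdiscr.UBRE}) gives $\widetilde n\,\mathcal V_u(\boldsymbol\lambda)=-2\ell(\boldsymbol\vartheta^*)+2\kappa\,\textup{tr}(\mathbf P)-\widetilde n+o_p(1)$; setting $\kappa=1$ and dropping the $\boldsymbol\lambda$-free constant $-\widetilde n$ and the positive multiplier $\widetilde n$ delivers $\mathcal V_u(\boldsymbol\lambda)\propto2\,\textup{tr}(\mathbf P)-2\ell(\boldsymbol\vartheta^*)$, the AIC with its parameter-space dimension corrected for $\mathbf S_{\boldsymbol\lambda}$, as asserted.

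The main obstacle is the equivalence used in the second step: making it rigorous that $\textup{RSS}_W$, the deviance, and $-2\ell(\boldsymbol\vartheta^*)$ agree up to terms that are negligible at the order relevant for smoothing-parameter selection and carry no leading-order $\boldsymbol\lambda$-dependence. This is the familiar ``working residuals approximate the deviance'' argument underlying GCV/UBRE for generalised linear models; it leans on the regularity conditions of Section~\ref{sct:GAMdiscr.asymptotics} and on the $\sqrt n$-consistency of $\widehat{\boldsymbol\vartheta}$, which render the quadratic expansion uniformly accurate over the pertinent neighbourhood. Everything else is routine algebra with the identities recorded above.
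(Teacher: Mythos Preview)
Your argument is essentially correct at the level of rigor the paper adopts, but it follows a genuinely different route from the paper's own proof.

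The paper does \emph{not} evaluate the working residual sum of squares at convergence. Instead it works at a generic P-IRLS step: it Taylor-expands $-2\ell(\boldsymbol\vartheta^*)$ to second order about the \emph{previous} iterate $\boldsymbol\vartheta$, then uses the identity $\overline{\mathbf z}=\mathbf D\boldsymbol\vartheta+\overline{\mathbf W}^{-1}\mathbf u$ (with $\mathbf u$, $\overline{\mathbf W}$, $\mathbf D$ all evaluated at $\boldsymbol\vartheta$) to rewrite the quadratic term. A short algebraic manipulation yields
\[
-2\ell(\boldsymbol\vartheta^*)\;\approx\;-2\ell(\boldsymbol\vartheta)\;-\;\big\|\sqrt{\overline{\mathbf W}}^{-1}\mathbf u\big\|^2\;+\;\big\|\sqrt{\overline{\mathbf W}}(\overline{\mathbf z}-\mathbf D\boldsymbol\vartheta^*)\big\|^2,
\]
and the crucial observation is that the first two summands are evaluated at the fixed previous iterate $\boldsymbol\vartheta$, hence are \emph{manifestly} independent of $\boldsymbol\lambda$. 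Dropping them gives $\textup{RSS}_W\equiv -2\ell(\boldsymbol\vartheta^*)$ up to an additive $\boldsymbol\lambda$-free constant, and the proposition follows immediately.

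Your route instead fixes the converged point, so that $\textup{RSS}_W=\mathbf u^\top\overline{\mathbf W}^{-1}\mathbf u$ exactly, and then identifies this as a Pearson/score statistic which, by the classical Pearson--deviance equivalence together with $\ell_{\textup{sat}}=0$ for a categorical likelihood, equals $-2\ell(\boldsymbol\vartheta^*)+o_p(1)$. This is perfectly legitimate, and it has the appeal of invoking a familiar named result. The cost is exactly the one you flag: the $o_p(1)$ remainder now depends on $\boldsymbol\vartheta^*(\boldsymbol\lambda)$, so arguing that it carries no leading-order $\boldsymbol\lambda$-dependence requires an extra uniformity step. The paper's device of expanding about the previous iterate sidesteps this entirely, since the ``constants'' it discards are constants by construction, not merely asymptotically. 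It also matches the way UBRE is actually deployed in Algorithm~\ref{alg:GAMdiscr.MPLE}, namely iteration-by-iteration rather than only at the fixed point.
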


\begin{proof}
See Supplementary Material \ref{sctSM:GAMdiscr.AICproof}.
\end{proof}

\begin{algorithm}[t]
\caption{Pseudo-code for MPLE Computation under a Outer Iteration Scheme} 
\label{alg:GAMdiscr.MPLE}      
\begin{algorithmic} 
\REQUIRE $\alpha\in(0,\texttt{iter.max})$; $\kappa\ge1$; $(r,F_J,\mathbf Z)$; $\mathbf S_{\boldsymbol\lambda}$
\STATE $\boldsymbol\vartheta^{[0]}$, $\boldsymbol\lambda^{[0]}$
\WHILE{$\alpha\le\texttt{iter.max}\mbox{ }\vee\mbox{ }\max\big|\boldsymbol\vartheta^{[\alpha+1]}-\boldsymbol\vartheta^{[\alpha]}\big|\ge10^{-6}$}
\STATE{$\boldsymbol\vartheta^{[\alpha+1]}\Leftarrow\max_{\boldsymbol\vartheta}\left[\ell(\boldsymbol\vartheta^{[\alpha]})-\boldsymbol\vartheta^{[\alpha]\top}\mathbf S_{\boldsymbol\lambda^{[\alpha]}}\boldsymbol\vartheta^{[\alpha]}\right]$}\vspace{0.1cm}
\STATE{$\mathbf D_i^{[\alpha+1]}\Leftarrow\left.\frac{\partial\boldsymbol\eta_k}{\partial\boldsymbol\vartheta}\right|_{\boldsymbol\vartheta=\boldsymbol\vartheta^{[\alpha+1]}}$; $\mathbf u_i^{[\alpha+1]}\Leftarrow\left.\frac{1}{\pi_k}\frac{\partial\mathcal F_k}{\partial\boldsymbol\eta_k}\frac{\partial\pi_k}{\partial\boldsymbol r_k}\right|_{\boldsymbol\vartheta=\boldsymbol\vartheta^{[\alpha+1]}}$}\vspace{0.1cm}
\STATE{$\mathbf W_i^{[\alpha+1]}\Leftarrow\left.\frac{1}{\pi_k}\frac{\partial^2\mathcal F_k}{\partial\boldsymbol\eta_k\partial\boldsymbol\eta_k^\top}-\mathbf u_i^{[\alpha+1]}\mathbf u_i^{[\alpha+1]\top}\right|_{\boldsymbol\vartheta=\boldsymbol\vartheta^{[\alpha+1]}}$; $\mathbf K_i^{[\alpha+1]}\Leftarrow\left.\frac{\partial^2\boldsymbol\eta_k}{\partial\boldsymbol\vartheta\partial\boldsymbol\vartheta^\top}\mathbf u_i^{[\alpha+1]}\right|_{\boldsymbol\vartheta=\boldsymbol\vartheta^{[\alpha+1]}}$}\vspace{0.1cm}
\STATE{compute $\mathbf D^{[\alpha+1]}$, $\mathbf u^{[\alpha+1]}$, $\mathbf K^{[\alpha+1]}$, $\mathbf W^{[\alpha+1]}$, $\mathbf z^{[\alpha+1]}$ and $\mathbf P^{[\alpha+1]}$}\vspace{0.1cm}
\STATE{$\boldsymbol\lambda^{[\alpha+1]}\Leftarrow\min_{\boldsymbol\lambda}\left[\big\|\sqrt{\mathbf W}^{[\alpha+1]}(\mathbf z^{[\alpha+1]}-\mathbf D^{[\alpha+1]}\boldsymbol\vartheta^{[\alpha+1]})\big\|^2/\widetilde n-1+2\kappa\textup{tr}(\mathbf P^{[\alpha+1]})/\widetilde n\right]$}
\ENDWHILE
\end{algorithmic}
\end{algorithm}

\afterpage{
\begin{figure}[t]\centering
\includegraphics[width=0.32\textwidth,angle=270]{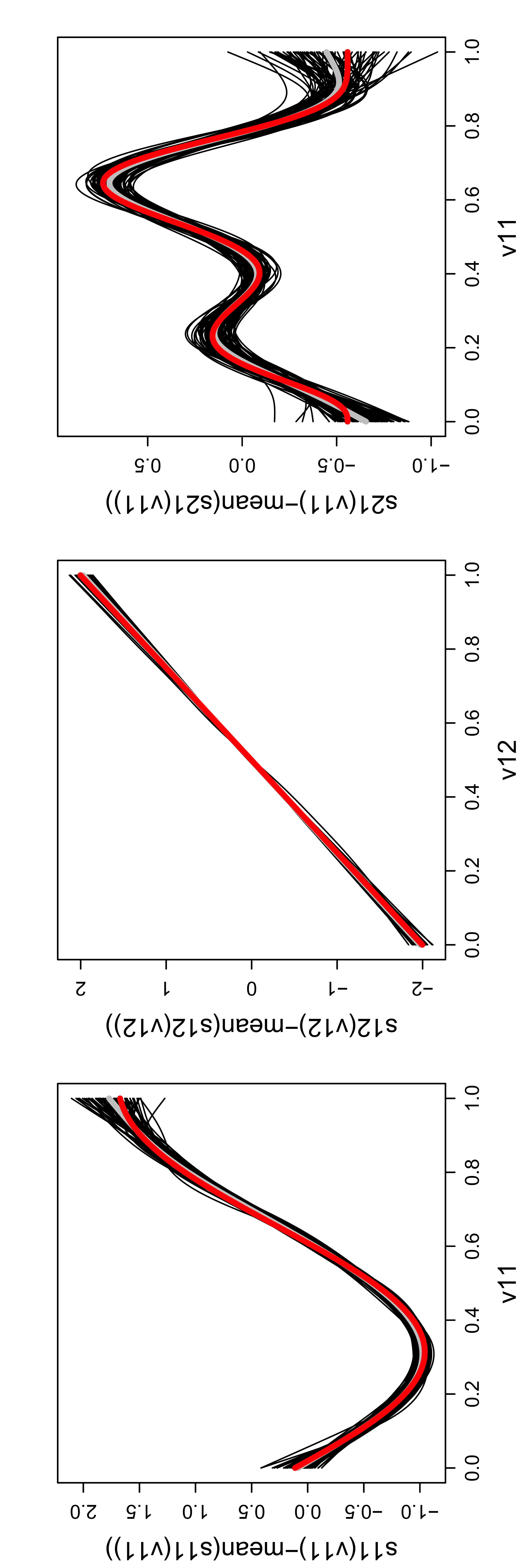}
\caption{Monte Carlo simulation based on 100 replications for the non-parametric components of the bivariate triangular ordered probit model described by form (\ref{eq:GAMdiscr.rFZtriang}): the first two graphs refer to the smooths included in the first equation, while the last corresponds to the model for $Y_2$. The true functions are depicted in red and an average estimated curve is represented in gray (notice that for many observations it overlaps with the corresponding truth value). The computations were performed by the \texttt R function \texttt{SemiParCLM} that implements (\ref{eq:GAMdiscr.rFZtriang}) and some nested models following Algorithm \ref{alg:GAMdiscr.MPLE}, on a model comprising 10,000 observation and correlation coefficient $\rho=0.9$. Details on the Data Generating Process are given in Supplementary Material \ref{appx:GAMdiscr.DGP}.}
\label{fig:GAMdiscr.simSmooths}
\end{figure}
}

\noindent The structure of the resulting fitting procedure is detailed in Algorithm \ref{alg:GAMdiscr.MPLE}, and an illustration of its ability to recover the unknown smooth functions shown in Figure \ref{fig:GAMdiscr.simSmooths} by a simulation study for the triangular model (\ref{eq:GAMdiscr.rFZtriang}). Some considerations on the asymptotic behaviour of the proposed estimator as well as a method to compute confidence intervals for the included smooths are detailed in Supplementary Materials \ref{sct:GAMdiscr.asymptotics} and \ref{sctSM:GAMdiscr.CI}, respectively.

\section{Real Data Illustration: HIV Prevalence in Zambia}
We illustrate now the proposed framework via the estimation of a sample selection instance. In doing this, we specialise our structure to describe a bivariate probit regression with association parameter explained through an additive linear predictor (\citealp{RMW_15}). This feature is attractive in the context of unmeasured confounding as it allows to account for various degrees of non-random sample selection across observations, and it helps to explain how the association between the relevant outcomes is affected by common unobservables for different individuals and covariates.

The resulting model is then applied to data from the 2007 Zambia Demographic Health Survey (DHS) to flexibly estimate the prevalence of HIV in the Zambian male population. Our analysis complements the study of \cite{MBMR_15} through the inclusion of non-parametric covariate effects, and the specification of the aforementioned elements proper of a distributional regression. The following discussion is further extended by \cite{MRBWM_15}, to which we refer the reader for more extensive and thoughtful argumentations. All the relevant computations presented in the study are performed in the \texttt R environment (\citealp{R_15}) using the package \texttt{SemiParBIVProbit} (\citealp{MarraRadice_15}) which implements the ideas discussed in this article for the binary case, and whose corresponding representation in the $(r,F_2,\mathbf Z)$ form has been previously given in (\ref{eq:GAMdiscr.ssBinary}). Notice, however, that because of existing restrictions in the original data-set diffusion, just a one generated from it can be used for reproducibility purposes, and it can be accessed from the above-mentioned package (\texttt{hiv}); details on the employed model specification can be found in \cite{MBMR_15}.

\subsection{A Dichotomous Regression Defined Through Bivariate Copulae}
The models presented in Section \ref{sct:GAMdiscr.ConfoundingSS} where originally defined through a bivariate Gaussian distribution. This may be a strong assumption though, especially in applied disciplines where symmetries are unlikely or implausible: a mitigation of these constraints can then be achieved by extending the framework to copulae. As a first definition, let $F_{1,j}$ be the marginal distribution of the $j$-th component of $\boldsymbol Y^*$, and consider the map $\mathcal C_J:[0,1]^{J}\longrightarrow[0,1]$, such that
\begin{equation}
\mathcal C_J(F_{1,1}(Y_1^*),\ldots,F_{1,J}(Y_J^*))=:\mathcal C_J(U_1,\ldots,U_J)\nonumber
\end{equation}
is the joint cumulative distribution function of $(U_1,\ldots,U_J)^\top$. Then $U_j$ is uniformly distributed for each $j\in\mathcal J$, and $\mathcal C_J$ is called the $J$-variate copula of the vector $(Y_1^*,\ldots,Y_J^*)^\top$ which is a \textit{bona fide} multivariate distribution function under the Sklar's Theorem (\citealp{Sklar_59}). Notice that, simply by denoting $F_J(\boldsymbol\eta_k)\equiv\mathcal C_J(F_{1,1}(\eta_{1,k_1}),\ldots,F_{1,J}(\eta_{J,k_J}))$, any copula representation in principle belongs already to the class of models we have introduced; for a full account of copulae and their properties we refer to the monograph of \cite{Nelsen_06}.

A bivariate copula regression for dichotomous responses sets the probability of any $\pi_k\in\boldsymbol\pi$, for $k\in\{(0,0),(0,1),(1,0),(1,1)\}=:\mathcal K$, as
\begin{eqnarray}
\pi_k&:=&\mathbb P[Y_1=k_1,Y_2=k_2]\nonumber\\
&=&(r^{-1}\circ\mathcal C_2)(F_{1,1}(\eta_{1,k_1}),F_{1,2}(\eta_{2,k_2});\gamma^*)=\mathcal C_2(F_{1,1}(\eta_{1,k_1}),F_{1,2}(\eta_{2,k_2});\gamma),\nonumber
\end{eqnarray}
where the last equality follows for $r$ being the identity map, and with $\gamma$ being an association parameter measuring the dependence between the two marginals. For optimisation purposes it is sometimes desirable to unbound the support of $\gamma$, hence a specific copula-dependent transformation $\gamma^*$ may be applied, which is taken here to be a function of the covariate vector $\mathbf x_3$. Since the corresponding $(r,F_2,\mathbf Z)$ representation of this model for non-random sample selection is given by
\begin{equation}
\left(\pi_k^s,\mathcal C_2(\Phi(\eta_{1,k_1}),\Phi(\eta_{2,k_2});\gamma^*(\mathbf x_3)),\mathbf Z\right),\nonumber
\end{equation}
in the proceeding all the IRLS quantities needed to perform estimation and inference can be derived from this, while the binding copula is intentionally left unspecified.

The specialisation of the model for dichotomous responses simplifies the generic framework considerably. In particular, by neglecting any triangular structure ($\boldsymbol\Gamma=\mathbf L=\mathbf I_3$), $\mathbf D_i$ reduces to the $3\times p$ design matrix $\mathbf X_i=(\mathbf x_{1,i},\mathbf x_{2,i},\mathbf x_{3,i})^\top$, and $\mathbf K_i=\boldsymbol 0_{p,p}$; whereas the GLM representation implies $\partial \ell_i/\partial\theta_k=1-\pi_k$ for any $k\in\mathcal K$ actually observed, $\partial^2\theta_k/\partial\pi_k^2=\pi_k^{-1}(1-\pi_k)^{-2}-\pi_k^{-2}(1-\pi_k)^{-1}$ and $\partial^2\ell_i/\partial\theta_k^2=-b''(\theta_k)=-\pi_k(1-\pi_k)$. Let now $\boldsymbol\eta_k=(\eta_{1,i},\eta_{2,i},\gamma_i^*)^\top$ be the vector of the linear predictors evaluated at the $i$-th individual, where the subscript for $\gamma_i^*$ is introduced to remark its dependence on $\mathbf x_{3,i}$, then we can further decompose
\begin{equation}
\frac{\partial\mathcal F_k}{\partial\boldsymbol\eta_k}=\frac{\partial\boldsymbol F_k}{\partial\boldsymbol\eta_k}\frac{\partial\mathcal C_k}{\partial\boldsymbol F_k}\label{eq:GAMdiscr.copulaDec}
\end{equation}
to make explicit the contribution of the marginal distributions to $\mathcal F_k$. The score and the main component of Hessian matrix are then
\begin{equation}
\nabla_{\boldsymbol\vartheta}\ell_i=\frac{1}{\pi_k}\mathbf X_i^\top\frac{\partial\boldsymbol F_k}{\partial\boldsymbol\eta_k}\frac{\partial\mathcal C_k}{\partial\boldsymbol F_k}\mbox{ }\mbox{ }\mbox{ }\mbox{ }\mbox{ }\mbox{and}\mbox{ }\mbox{ }\mbox{ }\mbox{ }\mbox{ }\mathbf W_i=\frac{1}{\pi_k}\frac{\partial^2\mathcal F_k}{\partial\boldsymbol\eta_k\partial\boldsymbol\eta_k^\top}-\frac{1}{\pi_k^2}\frac{\partial\boldsymbol F_k}{\partial\boldsymbol\eta_k}\frac{\partial\mathcal C_k}{\partial\boldsymbol F_k}\left(\frac{\partial\boldsymbol F_k}{\partial\boldsymbol\eta_k}\frac{\partial\mathcal C_k}{\partial\boldsymbol F_k}\right)^\top\nonumber
\end{equation}
with
\begin{equation}
\frac{\partial^2\mathcal F_k}{\partial\boldsymbol\eta_k\partial\boldsymbol\eta_k^\top}=\frac{\partial\boldsymbol F_k}{\partial\boldsymbol\eta_k}\frac{\partial^2\mathcal C_k}{\partial\boldsymbol F_k\partial\boldsymbol F_k^\top}\left(\frac{\partial\boldsymbol F_k}{\partial\boldsymbol\eta_k}\right)^\top+\frac{\partial^2\boldsymbol F_k}{\partial\boldsymbol\eta_k\partial\boldsymbol\eta_k^\top}\frac{\partial\mathcal C_k}{\partial\boldsymbol F_k}.\nonumber
\end{equation}
If we further assume Standard Normal marginals for both the components of $\boldsymbol Y$, (\ref{eq:GAMdiscr.copulaDec}) specialises as
\begin{equation}
\frac{\partial\mathcal F_k}{\partial\boldsymbol\eta_k}=\left[\begin{array}{ccc}\frac{\partial\Phi(\eta_{1,i})}{\partial\eta_{1,i}} & 0 & 0\\
0 & \frac{\partial\Phi(\eta_{2,i})}{\partial\eta_{2,i}} & 0\\
0 & 0 & \frac{\partial\gamma_i}{\partial\gamma_i^*}\end{array}\right]
\left[\begin{array}{c}\frac{\partial\mathcal C_k(\cdot,\cdot;\cdot)}{\partial\Phi(\eta_{1,i})}\\\frac{\partial\mathcal C_k(\cdot,\cdot;\cdot)}{\partial\Phi(\eta_{2,i})}\\\frac{\partial\mathcal C_k(\cdot,\cdot;\cdot)}{\partial\gamma_i}\end{array}\right]\nonumber\\
=\left[\begin{array}{c}\phi(\eta_{1,i})h_{1,i}\\\phi(\eta_{2,i})h_{2,i}\\\frac{\partial\gamma_i}{\partial\gamma_i^*}h_{3,i}\end{array}\right],\nonumber
\end{equation}
where $h_{j,i}$ and $\partial\gamma_i/\partial\gamma_i^*$ are quantities specific to the copula employed. Moreover, $\partial^2\mathcal C_k/\partial\boldsymbol F_k\partial\boldsymbol F_k^\top$ is the symmetric matrix with generic element $h_{l,m,i}=\partial h_{l,i}/\partial\Phi(\eta_{m,i})$, $l,m=1,\ldots,3$, under the notational abuse $\eta_{3,i}:=\gamma_i$, and
\begin{eqnarray}
\frac{\partial^2\boldsymbol F_k}{\partial\boldsymbol\eta_k\partial\boldsymbol\eta_k^\top}\frac{\partial\mathcal C_k}{\partial\boldsymbol F_k}=\textup{diag}\left(\begin{array}{ccc}-h_{1,i}\phi(\eta_{1,i})\eta_{1,i} & -h_{2,i}\phi(\eta_{2,i})\eta_{2,i} & h_{3,i}\frac{\partial^2\gamma_i}{\partial\gamma_i^{*2}}\end{array}\right).\nonumber
\end{eqnarray}
The derivations above make it clear that $\mathbf W_i$ is a symmetric $3$-dimensional matrix whose generic element $w_{l,m,i}$, for $l,m\neq3$, after some tedious algebra, is given by
\begin{equation}
w_{l,m,i}=\frac{1}{\pi_k}\left[h_{l,m,i}\phi(\eta_{l,i})\phi(\eta_{m,i})-\mathds 1_{l=m}h_{l,i}\phi(\eta_{l,i})\eta_{l,i}\right]-\frac{1}{\pi_k^2}h_{l,i}h_{m,i}\phi(\eta_{l,i})\phi(\eta_{m,i})\mbox{ }\mbox{ }\mbox{ }l,m\neq3,\nonumber
\end{equation}
and expressions for $l,m=3$ can be obtained in a similar fashion as based on the quantities derived above. Finally, the $i$-th addendum defining the Hessian matrix is simply $\mathcal H_i=\mathbf X_i^\top\mathbf W_i\mathbf X_i$, and the pseudo-data vector $\mathbf z_i=\mathbf X_i\boldsymbol\vartheta-\mathbf W_i^{-1}\mathbf u_i$ is $3\times1$, with $\mathbf u_i=\pi_k^{-1}(\partial\mathcal F_k/\partial\boldsymbol\eta_k)$.

\subsection{Background and Results}\label{sct:Zambia}
HIV prevalence in a population is defined as the fraction of people who are infected or, expressed equivalently, as the probability that a randomly drawn individual has the disease. Accurate estimation of the HIV prevalence is essential to policy makers to develop effective control programmes and interventions. Only in recent years, however, in countries where the diffusion of the virus is generalised epidemic, the lack of available administrative data has been overcome by the intensive use of population-based surveys (\citealp{BGW_03}). This is an important new source of data: prior to their introduction, national estimates have prevalently relied on some number of sentinel antenatal clinics  (\citealp{UNAIDS.WHO_03}), whose data may nonetheless present different sources of bias. First of all, their samples are based only on sexually active women who are pregnant and attend a clinic; secondly the location of the facilities, mostly concentrated in urban areas, may also induce a bias even in the subpopulation of pregnant women. These points have been elucidated and discussed with greater details in \cite{MMH_08} and \cite{ADCP_14}, among the others.

On the other hand, participation rates for HIV testing in national surveys are generally low, and ranges from 72\% for men to 77\% for women in the 2007 Zambia DHS (\citealp{HSCHZB_12}), although even lower peaks are recorded in the 2004 Malawi DHS (63\% and 70\%, respectively). There are potentially many reasons inducing this pattern, including concerns, lack of incentive to participate, survey fatigue or migration of those targeted for interview (\citealp{Gersovitz_11}; \citealp{Sterck_13}; \citealp{MBMR_15}); missing data on respondents' HIV status represent therefore a not necessarily less severe cause of bias than the ones already mentioned above. This case study focuses on refusal to be tested for HIV, which is commonly regarded as the main reason of missingness in surveys.

Notice, however, that the use of imputation or weighting techniques are likely to produce biased estimates if the selection mechanism does not occur at random, an assumption violated wherever the reasons of the refusal to test are caused by some unobserved factors. This is the case, for example, when individuals refuse to screen because they already know (or correctly predict) their HIV status, and fear others will learn about their seropositivity if they participate in the survey (\citealp{MBMR_15}). The framework introduced in this article allows us to estimate a Heckman-type selection model which is able to account for the possibility that data are missing not at random. Specifically, this is achieved by modelling item non-response as a function of unobserved variables that also affect the individual HIV status, and by specifying the selection mechanism together with an assumption on the distribution of the unobservables. To foster the identification of the causal mechanism in the study an exclusion restriction is imposed: that is we qualify the dependence of the missing data mechanism on a relevant variable independent of both the outcome of interest given the willingness to take the test, and the unobservables. This is usually labelled an instrument in econometrics and epidemiology, and the interviewer identity is regarded here as a valid instrument to be employed. In fact, previous researches, including \cite{BBWKC_11}, \cite{HSCHZB_12}, \cite{JGWT_14} and \cite{MBMR_15}, have successfully included such a variable in their studies, on the grounds that interviewer identity generally predicts consent to be tested, but it is unlikely it also affects the actual HIV status.

A pictorial representation of the effects on the estimates of applying a sample selection model is reported in Figure \ref{fig:GAMdiscr.maps}. By comparison with the first map, the second one shows immediately how the simple imputation of the values under a random missingness assumption may severely underestimate the HIV prevalence in the Zambian provinces. The imputation has been conducted by making predictions from the univariate model upon discarding the missing values. The third map depicts instead how the association parameter of the employed copula varies among the different regions of the country, and it has been constructed by exploiting its dependence on the geographical location where the survey took place.

\begin{figure}[t]\centering
\includegraphics[width=0.30\textwidth, angle=270]{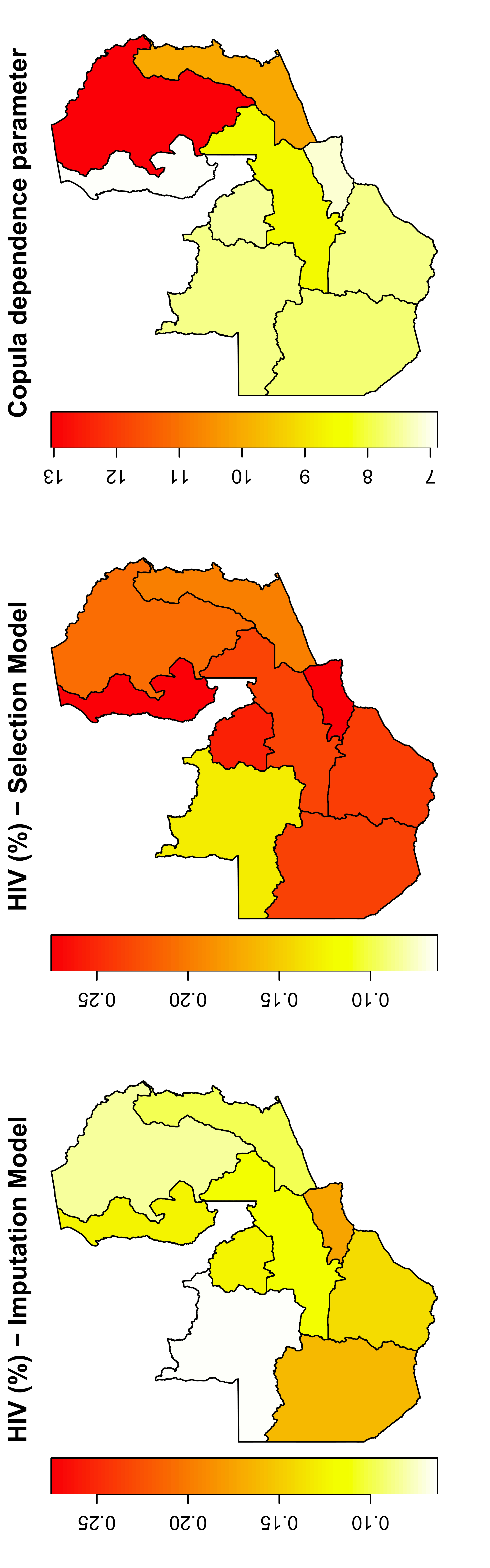}
\caption{First two panels: HIV prevalence for the male population in nine of the ten Provinces of Zambia (Northern, Muchinga, as well as part of Eastern have been merged because of the data availability) applying an imputation model not accounting for the possible presence of values missing not at random, and the corresponding estimates when a bivariate model is fitted instead, respectively. Third panel: the estimated absolute values of the association parameter, with range $(1,\infty)$, in a Joe copula rotated counterclockwise of $90^\circ$. The higher its value, the stronger the estimated association between the two equations; that is, the more relevant the impact of neglecting unobservables in the estimation of the HIV prevalence. The spatial effects are obtained here by specifying appropriately the penalty matrix as described in Section \ref{sct:GAMdiscr.GLMpenal}.} 
\label{fig:GAMdiscr.maps}
\end{figure}

Figure \ref{fig:GAMdiscr.smooths} then reports the smooth function estimates for the treatment and outcome equations, along with their different degrees of non-linearities and associated point-wise confidence intervals, when a Joe$_{90}$ copula model is fitted to the Zambia DHS data; the subscript is used to denote the corresponding copula's degrees of rotation. Notice that, compared to a bivariate Gaussian, the Joe copula is characterised by having a stronger dependence in one tail of the distribution, and its choice for our study has been motivated by the implied negative association between the two marginals, as we would expect to occur wherever persons refuse to be tested on the basis of some prior knowledge of their HIV+ status. Other existing competitors allowing for the same sign of association include models based on the bivariate Gaussian, Frank, Clayton$_{90;270}$, Joe$_{270}$ and Gumbel$_{90;270}$ copulae, which are all implemented in \texttt{SemiParBIVProbit} and discussed within a system of equations in \cite{RMW_15}. As based on information criteria, we found that the Joe$_{90}$ is the best fitting to the male population data, hence our decision to report only selected estimates obtained from this distribution.

\begin{figure}[t]\centering
\includegraphics[width=0.6\textwidth, angle=270]{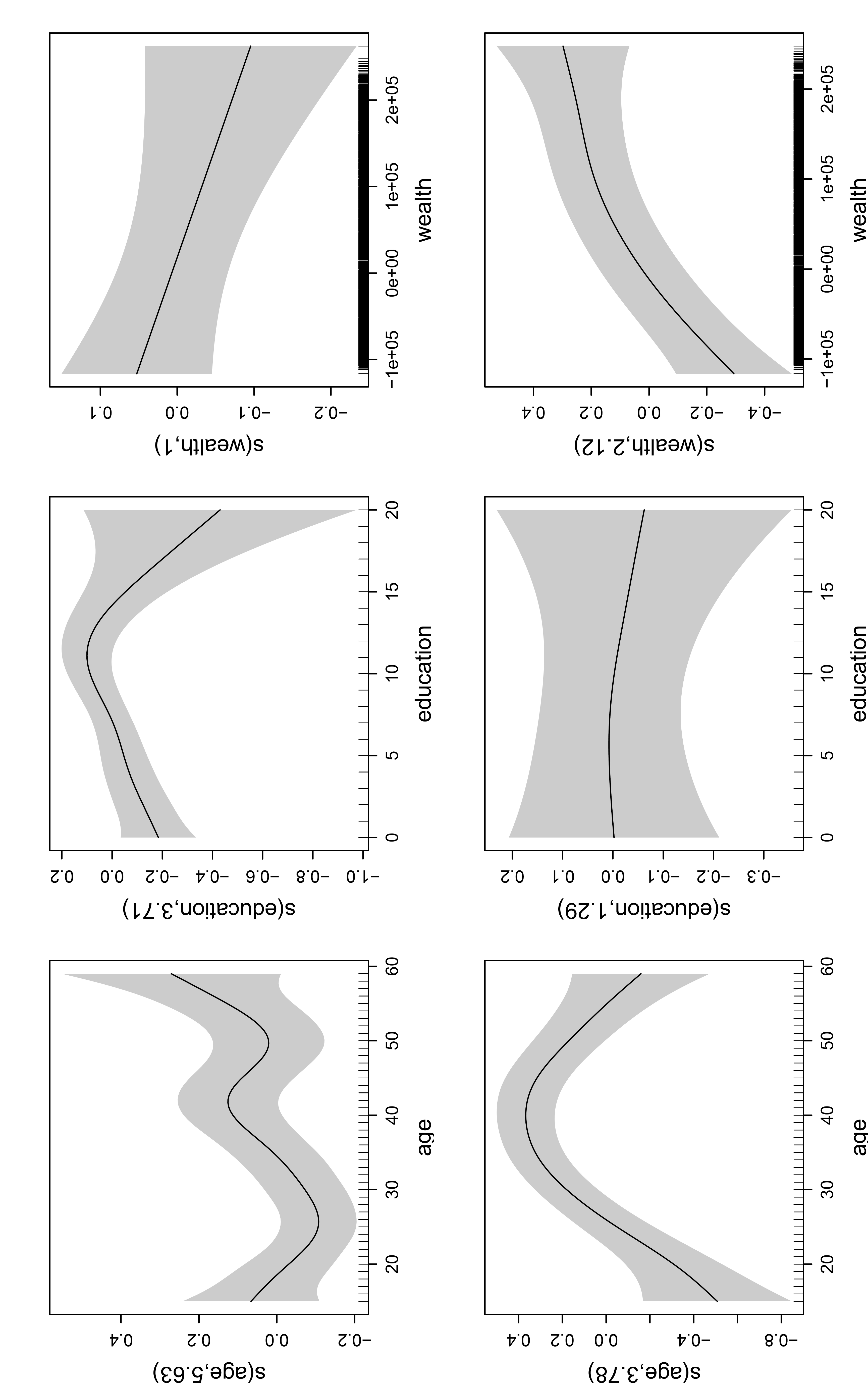}
\caption{Top panel: smooth function estimates and associated 95\% point-wise confidence intervals in the treatment equation obtained by applying the Joe$_{90}$ regression spline model on the 2007 Zambia DHS data. The method employed to derive the confidence bands is described in Supplementary Material \ref{sctSM:GAMdiscr.CI}. Results are plotted on the scale of the linear predictor and the jittered rug plot, at the bottom of each graph, shows the covariate values.The smooth components are represented using low-rank penalized thin plate regression splines (\citealp{Wood_03}) with basis dimensions equal to 10 and penalties based on second order derivatives. The numbers in brackets in the y-axis captions are the effective degrees of freedom of the smooth curves. Bottom panel: estimated smooth functions in the outcome equation.}
\label{fig:GAMdiscr.smooths}
\end{figure}

As a final remark, we shall stress that the assumption of distinct distributions may in principle lead to different estimates of the HIV prevalence (although it seems not to be an issue in this particular application as reported, for instance, by \citealp{MBMR_15}), as well as these can be impacted by the specific functional form of the covariates employed. To deal with this critic, some authors advanced instead the identification of a \textit{region} (rather than of a singleton) of plausible values in which the parameters of interest necessarily lie, given the available data and the maintained assumptions. This switch from \textit{point} to \textit{partial} identification is discussed in general terms in \cite{Manski_95,Manski_03} and \cite{HorowitzManski_00}, and applied to a similar HIV context by \cite{ADCP_14}. Although theoretically valid and appealing, a major drawback of this approach is the realistic possibility of obtaining large width of the estimated bounds: this in turn may let the communication of any result to policymakers harsh even in the case where the identifying region is shrunken by the imposition of a monotone instrumental variable. 

Acknowledging this issue, our estimated model extends the traditional Heckman-type by accounting for three degrees of flexibility: the inclusion of non-parametric effects in the representation of the covariate-response functional form, the specification of bivariate copulae to detect more complex dependence structures than classical distributions usually assume, and the direct modelling of the association parameters in terms of some predictors. It is our hope, in this way, to conjugate both the point and partial identification strengths by providing the researchers with a set of flexible tools aimed at exploring the identifying region widely, and so to make better informed judgments about the robustness of their results wherever a point estimate is sought.

\section{Discussion}
This paper has devised a generic framework for the representation and estimation of a Generalized Linear Model for a $J$-dimensional vector of discrete responses, with a ridge-type penalisation term employed in the fitted algorithm. The resulting class of models allows us to include non-parametric and spatial covariate effects, among the others, as represented through the penalty matrix $\mathbf S_{\boldsymbol\lambda}$. In this way, a baseline multivariate Generalized Additive Model has been effectively extended to encompass different kind of modelling instances within the same unifying framework. In fact, by translating the approach of \cite{PTG_14} to the multivariate case, only the $(r,F_J,\mathbf Z)$ form and the matrix $\mathbf S_{\boldsymbol\lambda}$ are formally required to apply the proposed estimation algorithm and related inferential results to different models in the class.

In particular, once the class has been described in some generality, we have introduced a number of bivariate models employed in the literature to account for the possible presence of residual confounding in observational studies. The proposed representation provided us with a flexible machinery able to extend these models in various directions, foremost towards the additive semi-parametric specification of the linear predictors in the spirit of (V)GAMs. This is, \textit{per se}, already a relevant issue in applied research, since it permits to handle a data-driven representation of the covariate-response relationship and hence to alleviate a possible source of bias stemming from model mis-specification. Moreover, we have described how the framework can be further specified in order to include multivariate distributions as computed by copulae of univariate marginals: some analytical results have been derived for Normal marginals within a bivariate dichotomous regression model.

A further feature illustrated by the article has been the direct modelling of any copula association parameter in terms of known predictors. As shown in the analysis of non-random sample selection for the 2007 DHS Zambia dataset, this characteristic is attractive as it allowed to quantify the strength of the unobservables within the different provinces of the country, and this in turns enabled us to provide new insights about the severity of the non-response issue in the study. In particular, Figure \ref{fig:GAMdiscr.maps} showed that the magnitude of the copula association parameter can vary considerably even between geographically close provinces, like Northern and Luapula in the example. On this point, the relevant literature has already stressed that demographic and environmental factors, like the presence of cities or high density housing, may impact the estimates of the HIV prevalence. Hence, the combination of this knowledge with the possibility of letting the association parameters depend on observed variables seems to us an attractive feature that could be investigated more closely. 

As a natural specification of the proposed framework, the practical implementation of models involving ordinal responses are being developed, whereas the estimation of higher dimensional systems of equations is still limited by the necessity of computing multivariate integrals with a good degree of accuracy. Under this respect, the exploitation of a more comprehensive class of models for copula distributions may be beneficial, possibly by allowing the non-parametric estimation of the marginals and/or the corresponding copulae. These are only some of the possible avenues of future research that will be undertaken.

\bibliography{DRBGAM}
\bibliographystyle{apa}


\clearpage
\begin{center}
\textbf{\LARGE Supplementary Material to \\``Discrete Responses in Bivariate Generalized Additive Models"}
\end{center}
\setcounter{equation}{0}
\setcounter{section}{0}
\setcounter{figure}{0}
\setcounter{table}{0}
\setcounter{algorithm}{0}
\setcounter{page}{1}
\makeatletter
\renewcommand{\thesection}{S.\arabic{section}}
\renewcommand{\theequation}{S\arabic{equation}}
\renewcommand{\thefigure}{S\arabic{figure}}
\renewcommand{\thetable}{S\arabic{table}}
\renewcommand{\thealgorithm}{S\arabic{algorithm}}
\renewcommand{\bibnumfmt}[1]{[S#1]}
\renewcommand{\citenumfont}[1]{S#1}


\section{Asymptotic Behaviour of the Estimator}\label{sct:GAMdiscr.asymptotics}
This section provides some arguments about the asymptotic behaviour of the proposed MPL estimator. Analogous results were also achieved by \cite{Kauermann_05} in the context of survival models, and by \cite{RMW_15} for a bivariate system of dichotomous outcomes. Although our derivations are based on the somehow theoretically stringent assumption that the dimension of the spline bases does not increase with the sample size, this instance is still worth to be considered because, in practice, the bases' dimensions have to be fixed in order to achieve estimation. Nonetheless, by taking the number of the bases relatively rich such to appropriately describe the unknown curves in the model, it is possible to assume heuristically that the approximation bias is negligible compared to the estimation variability (\citealp{Kauermann_05}). To the best of our knowledge, at present the relaxation of this assumption has been confined to the sole analysis of B-splines for their convenient representation and handling as, for instance, did \cite{KKF_09}. Therefore, despite the theoretical relevance of these results, they still do not encompass the whole range of smooths allowed by this work.

Let $\boldsymbol\vartheta_0$ be the ``true'' parameter vector, in the sense that it induces the best approximating likelihood corresponding to the structure that has generated the data. Namely, $\boldsymbol\vartheta_0$ is set the minimiser of the Kullback-Leibler discrepancy
\begin{equation}
\boldsymbol\vartheta_0:=\operatornamewithlimits{\arg\min}_{\boldsymbol\vartheta\in\Theta}\textup{KL}(\mathcal L_t|\mathcal L_n)=\mathbb E[\ell_t-\ell_n(\boldsymbol\vartheta)],\nonumber\label{eq:GAMdiscr.KLdist}
\end{equation}
where the expectation above is carried out with respect to the true model distribution. As a consequence, by direct differentiation of the above equation, we are implicitly defining $\boldsymbol\vartheta_0$ to be the vector such that $\boldsymbol0=\mathbb E[\nabla_{\boldsymbol\vartheta_0}\ell_n]$. For the proceeding analysis we rely on the following regularity conditions:

\begin{assump}
$\nabla_{\boldsymbol\vartheta_0}\ell_n=O_p(n^{1/2})$;\label{asmp:GAMdiscr.grad}
\end{assump}

\begin{assump}
$\mathbb E[\nabla_{\boldsymbol\vartheta_0\boldsymbol\vartheta_0^\top}\ell_n]=O(n)$;\label{asmp:GAMdiscr.FisherInfo}
\end{assump}

\begin{assump}
$\nabla_{\boldsymbol\vartheta_0\boldsymbol\vartheta_0^\top}\ell_n-\mathbb E[\nabla_{\boldsymbol\vartheta_0\boldsymbol\vartheta_0^\top}\ell_n]=O_p(n^{1/2})$; and\label{asmp:GAMdiscr.diffFisher}
\end{assump}

\begin{assump}
$\mathbf S_{\boldsymbol\lambda}=o(n^{1/2})$. Following \cite{Kauermann_05}, this assumption can be equivalently re-stated as $\lambda_{j,m_j}=o(n^{1/2})$ from the very construction of the penalty matrix, and from the fact that its dimensionality is taken fixed as $n$ increases.\label{asmp:GAMdiscr.Smatrix}
\end{assump}

\noindent The above \ref{asmp:GAMdiscr.grad}-\ref{asmp:GAMdiscr.diffFisher} are the standard conditions for the consistency of the unpenalized ML estimator, whereas \ref{asmp:GAMdiscr.Smatrix} ensures that, in the large sample limit, $\mathcal P_{\boldsymbol\lambda}$ becomes irrelevant for the fitting. For a further investigation, we also need an additional condition aimed at describing the behaviour of the log-likelihood third derivatives, and it guarantees the asymptotic Normal distribution of the score:

\begin{assump}
for every $\vartheta^s\in\boldsymbol\vartheta$, $(\partial^3/\partial\vartheta^{s3})\ell_n(\boldsymbol\vartheta)$ exists and satisfies for every point $x\in\mathds R$ and every parameter in the neighbourhood of $\vartheta_0^s$: $|(\partial^3/\partial\vartheta^{s3})\ell_n(\boldsymbol\vartheta)|\le M(x)$, with $\mathbb E[M(x)|\vartheta_0^s]<\infty$; and let $0\le\mathcal I(\vartheta_0^s)<\infty$.\label{asmp:GAMdiscr.3rdDeriv}
\end{assump}

\noindent Then it follows:

\begin{prop}\label{prop:GAMdiscr.MPLEconsistency}
Let $\boldsymbol\vartheta_0$ be the parameter vector defined as in (\ref{eq:GAMdiscr.KLdist}) and assume that conditions \ref{asmp:GAMdiscr.grad}-\ref{asmp:GAMdiscr.3rdDeriv} hold; then the MPL estimator $\widehat{\boldsymbol\vartheta}:=\operatornamewithlimits{\arg\max}_{\boldsymbol\vartheta\in\Theta}[\ell(\boldsymbol\vartheta)-1/2\boldsymbol\vartheta^\top\mathbf S_{\boldsymbol\lambda}\boldsymbol\vartheta]$ satisfies 
\begin{equation}
\widehat{\boldsymbol\vartheta}-\boldsymbol\vartheta_0=\mathbf F^{-1}(\boldsymbol\lambda)(\nabla_{\boldsymbol\vartheta_0}\ell(\boldsymbol\vartheta_0)-\mathbf S_{\boldsymbol\lambda}\boldsymbol\vartheta_0)[1+o_p(1)],\label{eq:GAMdiscr.consistencyMPLE}
\end{equation}
with $\mathbf F^{-1}(\boldsymbol\lambda)=(\mathbf S_{\boldsymbol\lambda}-\mathbb E[\nabla_{\boldsymbol\vartheta_0\boldsymbol\vartheta_0^\top}\ell(\boldsymbol\vartheta_0)])^{-1}$. In particular, the leading stochastic component in (\ref{eq:GAMdiscr.consistencyMPLE}) has asymptotic order $O_p(n^{-1/2})$.
\end{prop}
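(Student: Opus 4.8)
The plan is to carry out the classical Cram\'er-type expansion of a penalised M-estimator, adapted to the misspecified setting in which $\boldsymbol\vartheta_0$ is the Kullback--Leibler minimiser of (\ref{eq:GAMdiscr.KLdist}) rather than a ``true'' parameter. First I would record the first-order condition for the interior maximiser, $\nabla_{\widehat{\boldsymbol\vartheta}}\ell-\mathbf S_{\boldsymbol\lambda}\widehat{\boldsymbol\vartheta}=\boldsymbol 0_p$, and Taylor-expand the unpenalised score about $\boldsymbol\vartheta_0$; writing $\mathbf S_{\boldsymbol\lambda}\widehat{\boldsymbol\vartheta}=\mathbf S_{\boldsymbol\lambda}\boldsymbol\vartheta_0+\mathbf S_{\boldsymbol\lambda}(\widehat{\boldsymbol\vartheta}-\boldsymbol\vartheta_0)$, this yields
\begin{equation}
\boldsymbol 0_p=\nabla_{\boldsymbol\vartheta_0}\ell-\mathbf S_{\boldsymbol\lambda}\boldsymbol\vartheta_0+\big(\nabla_{\boldsymbol\vartheta_0\boldsymbol\vartheta_0^\top}\ell-\mathbf S_{\boldsymbol\lambda}\big)(\widehat{\boldsymbol\vartheta}-\boldsymbol\vartheta_0)+\boldsymbol R,\nonumber
\end{equation}
where each component of the remainder $\boldsymbol R$ is a bilinear form in $\widehat{\boldsymbol\vartheta}-\boldsymbol\vartheta_0$ with coefficients given by third derivatives of $\ell$ at an intermediate point. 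Assumption \ref{asmp:GAMdiscr.3rdDeriv} supplies the integrable envelope $M(x)$ bounding those derivatives, so that $\boldsymbol R=O_p(n)\,\|\widehat{\boldsymbol\vartheta}-\boldsymbol\vartheta_0\|^2$ uniformly over a neighbourhood of $\boldsymbol\vartheta_0$ whose radius shrinks at the correct rate. Rearranging gives $(\mathbf S_{\boldsymbol\lambda}-\nabla_{\boldsymbol\vartheta_0\boldsymbol\vartheta_0^\top}\ell)(\widehat{\boldsymbol\vartheta}-\boldsymbol\vartheta_0)=\nabla_{\boldsymbol\vartheta_0}\ell-\mathbf S_{\boldsymbol\lambda}\boldsymbol\vartheta_0+\boldsymbol R$.

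Next I would replace the sample Hessian by its expectation. By \ref{asmp:GAMdiscr.diffFisher}, $\nabla_{\boldsymbol\vartheta_0\boldsymbol\vartheta_0^\top}\ell=\mathbb E[\nabla_{\boldsymbol\vartheta_0\boldsymbol\vartheta_0^\top}\ell]+O_p(n^{1/2})$; by \ref{asmp:GAMdiscr.FisherInfo} the regularity conditions make $\mathbb E[\nabla_{\boldsymbol\vartheta_0\boldsymbol\vartheta_0^\top}\ell]$ negative definite of exact order $n$, while by \ref{asmp:GAMdiscr.Smatrix} $\mathbf S_{\boldsymbol\lambda}=o(n^{1/2})$; hence $\mathbf F(\boldsymbol\lambda):=\mathbf S_{\boldsymbol\lambda}-\mathbb E[\nabla_{\boldsymbol\vartheta_0\boldsymbol\vartheta_0^\top}\ell]$ is positive definite of order $n$, so $\mathbf F^{-1}(\boldsymbol\lambda)=O(n^{-1})$ and $\mathbf S_{\boldsymbol\lambda}-\nabla_{\boldsymbol\vartheta_0\boldsymbol\vartheta_0^\top}\ell=\mathbf F(\boldsymbol\lambda)(\mathbf I_p+O_p(n^{-1/2}))$. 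Before inverting I would first establish the rate $\|\widehat{\boldsymbol\vartheta}-\boldsymbol\vartheta_0\|=O_p(n^{-1/2})$ by the standard Cram\'er/Lehmann argument: on the sphere $\|\boldsymbol\vartheta-\boldsymbol\vartheta_0\|=cn^{-1/2}$ one shows $\ell_{\textup p}(\boldsymbol\vartheta)<\ell_{\textup p}(\boldsymbol\vartheta_0)$ with probability tending to one for $c$ large, the linear term being $O_p(1)\cdot c$ through \ref{asmp:GAMdiscr.grad} and \ref{asmp:GAMdiscr.Smatrix}, the quadratic term negative of order $c^2$ through \ref{asmp:GAMdiscr.FisherInfo}, and the cubic remainder negligible through \ref{asmp:GAMdiscr.3rdDeriv}, so that a local maximiser --- which I identify with $\widehat{\boldsymbol\vartheta}$ --- lies inside the sphere. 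Then $\boldsymbol R=O_p(n)\cdot O_p(n^{-1})=O_p(1)$, of smaller order than $\mathbf F(\boldsymbol\lambda)(\widehat{\boldsymbol\vartheta}-\boldsymbol\vartheta_0)=O_p(n^{1/2})$, so it is absorbed into a $[1+o_p(1)]$ factor; inverting then produces exactly (\ref{eq:GAMdiscr.consistencyMPLE}). The order statement follows because $\nabla_{\boldsymbol\vartheta_0}\ell=O_p(n^{1/2})$ by \ref{asmp:GAMdiscr.grad}, $\mathbf S_{\boldsymbol\lambda}\boldsymbol\vartheta_0=o(n^{1/2})$ by \ref{asmp:GAMdiscr.Smatrix}, and $\mathbf F^{-1}(\boldsymbol\lambda)=O(n^{-1})$, whence $\mathbf F^{-1}(\boldsymbol\lambda)(\nabla_{\boldsymbol\vartheta_0}\ell-\mathbf S_{\boldsymbol\lambda}\boldsymbol\vartheta_0)=O_p(n^{-1/2})$.

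The hard part will be the two ingredients the whole order count rests on: turning the pointwise third-derivative envelope of \ref{asmp:GAMdiscr.3rdDeriv} into a genuinely uniform $O_p(n)$ bound on $\boldsymbol R$ over a neighbourhood shrinking at the right rate, and running the Cram\'er sphere argument in the misspecified regime, where the relevant curvature is the ``sandwich'' Hessian $\mathbb E[\nabla_{\boldsymbol\vartheta_0\boldsymbol\vartheta_0^\top}\ell]$ rather than minus a Fisher information, and where one must also argue that the global MPLE coincides with the consistent local root. Once $\sqrt n$-consistency and the uniform remainder bound are secured, the remaining matrix algebra and the order bookkeeping with \ref{asmp:GAMdiscr.grad}--\ref{asmp:GAMdiscr.Smatrix} are routine.
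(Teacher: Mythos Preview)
Your approach is correct and reaches the same conclusion, but the two proofs are organised differently. The paper proceeds via the formal series-inversion device of Barndorff-Nielsen and Cox: from the second-order Taylor expansion of $\widehat\ell_{\textup p,r}$ about $\boldsymbol\vartheta_0$ it writes out the inverse series $(\widehat\vartheta-\vartheta_0)^r=-\ell_{\textup p}^{rs}\ell_{\textup p,s}-\tfrac12\ell_{\textup p}^{rtv}\ell_{\textup p,u}\ell_{\textup p,w}+\cdots$ and then bounds each displayed term directly, decomposing $\ell_{\textup p,rs}=f_{rs}(\boldsymbol\lambda)+r_{rs}$ and using \ref{asmp:GAMdiscr.grad}--\ref{asmp:GAMdiscr.3rdDeriv} to show the quadratic correction is $O_p(n^{-1})$ and hence negligible against the linear term of order $O_p(n^{-1/2})$. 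You instead first secure $\sqrt n$-consistency by a Cram\'er sphere argument and only then control the Lagrange remainder of a first-order Taylor expansion; this makes the source of the rate explicit and sidesteps the mild circularity of bounding terms in an inverse series whose validity already presupposes that rate, at the cost of an extra preparatory step. Both routes rely on the same order bookkeeping for $\mathbf F(\boldsymbol\lambda)$, $\nabla_{\boldsymbol\vartheta_0}\ell$ and $\mathbf S_{\boldsymbol\lambda}\boldsymbol\vartheta_0$, and the two issues you honestly flag --- uniformity of the third-derivative envelope over a shrinking neighbourhood and identification of the global MPLE with the consistent local root --- are treated only implicitly in the paper's argument as well.
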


\begin{proof}
We first set the notation. Let us denote by $\vartheta^j$ the $j$-th component of the parameter vector $\boldsymbol\vartheta=(\vartheta^1,\ldots,\vartheta^p)^\top$, and define subsequently $\ell_{\textup p,j}:=\partial\ell_{\textup p}/\partial\vartheta^j$ the partial derivative of the penalized log-likelihood with respect to $\vartheta^j$; higher order derivatives are denoted subsequently. Also, the ``hat'' notation $\widehat{\ell}_{\textup p}$ stands for $\ell_{\textup p}(\widehat{\boldsymbol\vartheta})$, while the convention of omitting the listing of parameters is used wherever the relevant quantities are evaluated at the best coefficient $\boldsymbol\vartheta_0$, that is $\ell_{\textup p}:=\ell_{\textup p}(\boldsymbol\vartheta_0)$.

Using the Einstein summation convention, we expand $\widehat{\ell}_{\textup p,r}$ around $\ell_{\textup p,r}$ using a second order Taylor approximation:
\begin{equation}
0=\widehat{\ell}_{\textup p,r}=\ell_{\textup p,r}+\ell_{\textup p,rs}(\widehat{\vartheta}-\vartheta_0)^s+\frac{1}{2}\ell_{\textup p,rst}(\widehat{\vartheta}-\vartheta_0)^{st}+\cdots\nonumber
\end{equation}
with $(\widehat{\vartheta}-\vartheta_0)^s:=\widehat{\vartheta}^s-\vartheta_0^s$
and $(\widehat{\vartheta}-\vartheta_0)^{st}=(\widehat{\vartheta}-\vartheta_0)^s(\widehat{\vartheta}-\vartheta_0)^t$. Solving the above equation for $\widehat{\vartheta}-\vartheta_0$, and denoting by superscripts the inverses of the respective quantities, we get (\citealp{BNC_94}):
\begin{equation}
(\widehat{\vartheta}-\vartheta_0)^r=-\ell_{\textup p}^{rs}\ell_{\textup p,s}-\frac{1}{2}\ell_{\textup p}^{rtv}\ell_{\textup p,u}\ell_{\textup p,w}+\cdots\label{eq:GAMdiscr.serInver}
\end{equation}
where $\ell_{\textup p}^{rtv}:=\ell_{\textup p}^{rs}\ell_{\textup p}^{tu}\ell_{\textup p}^{vw}\ell_{\textup p,stv}$, and $\ell_{\textup p}^{rs}$ is the $(r,s)$-th element of the inverse observed (penalized) Fisher Information. Equation (\ref{eq:GAMdiscr.serInver}) can be simplified as follows (see, for example, \citealp{Kauermann_05}): $\ell_{\textup p,rs}:=f_{rs}(\lambda)+r_{rs}$, where $f_{rs}(\lambda):=f_{rs}(0)-s_{\lambda}^{rs}$ is the penalized expected Fisher Information contribution: $f_{rs}(0):=\mathbb E[\partial\ell/\partial\vartheta^r\partial\vartheta^s]$, and $r_{rs}:=\ell_{rs}-f_{rs}(0)$.

Under assumptions \ref{asmp:GAMdiscr.FisherInfo} and \ref{asmp:GAMdiscr.Smatrix} we find that $f_{rs}(\lambda)$ is of asymptotic order $O(n)$, and that $r_{rs}=O_p(n^{1/2})$ directly from \ref{asmp:GAMdiscr.diffFisher}. We can then simplify the first term of (\ref{eq:GAMdiscr.serInver}) as
\begin{eqnarray}
-\ell_{\textup p}^{rs}&=&\mathbb E[\ell_{\textup p,r}\ell_{\textup p,s}]^{-1}+\mathbb E[\ell_{\textup p,r}\ell_{\textup p,t}]^{-1}\mathbb E[\ell_{\textup p,s}\ell_{\textup p,u}]^{-1}(\mathbb E[\ell_{\textup p,t}\ell_{\textup p,u}]+\ell_{p,tu})\nonumber\\
&=&-f^{rs}(\lambda)+f^{rt}(\lambda)f^{su}(\lambda)(-f_{rs}(\lambda)+\ell_{p,tu}),\nonumber
\end{eqnarray}
that is $\ell_{\textup p}^{rs}=f^{rs}(\lambda)-f^{rt}(\lambda)f^{su}r_{tu}$; following now the argument of \cite{KKF_09} we have
\begin{equation}
\ell_{\textup p}^{rs}=f^{rs}(\lambda)[1+O(n^{-1})O_p(n^{1/2})]=f^{rs}(\lambda)[1+O_p(n^{-1/2})].\nonumber
\end{equation}
We need to characterise next the order of $\ell_{\textup p}^{rtv}$, which in turns depend on the one of $\ell_{\textup p,stv}$. First note that $\ell_{\textup p,stv}=\ell_{stv}$ from the very construction of the penalized likelihood estimator, so that we can safely apply \ref{asmp:GAMdiscr.3rdDeriv}, implying that we can bound in probability the third derivative of the log-likelihood. Then, by the strong law of large numbers, we have that, for almost every sequence of $\{x_1,\ldots,x_n\}$ and every $\boldsymbol\vartheta\in\Theta$,
\begin{equation}
|n^{-1}\ell_{stv}|\le n^{-1}\sum_iM(x_i)\xrightarrow{as}\mathbb E[M(x)]\nonumber
\end{equation}
as $n\rightarrow\infty$, hence $n^{-1}\ell_{stv}=O_p(1)$. It is then implied $\ell_{stv}=O_p(n)$ and, after some tedious computations, $\ell_{\textup p}^{rtv}=f^{rs}(\lambda)f^{tu}(\lambda)f^{vw}(\lambda)O_p(n)=O_p(n^{-2})$ so that $\ell_{\textup p}^{rtv}\ell_{\textup p,u}\ell_{\textup p,w}=O_p(n^{-1})$ since $\ell_{\textup p,u}=O_p(n^{1/2})-o(n^{1/2})$. We also find that $\ell_{\textup p}^{rs}\ell_{\textup p,s}$ has order $O_p(n^{-1/2})+o(n^{-1/2})$, that is the second addendum in (\ref{eq:GAMdiscr.serInver}) becomes asymptotically negligible compared to $\ell_{\textup p}^{rs}\ell_{\textup p,s}$. We can then write $(\widehat{\vartheta}-\vartheta_0)^r=-f^{rs}(\lambda)\ell_{\textup p,s}[1+o_p(1)]$, whose leading terms, in matrix notation, are $\mathbf F^{-1}(\boldsymbol\lambda)(\nabla_{\boldsymbol\vartheta_0}\ell(\boldsymbol\vartheta_0)-\mathbf S_{\boldsymbol\lambda}\boldsymbol\vartheta_0)$, from which the assertion follows.

The stochastic order of the above terms then stems from $f^{rs}(\lambda)\ell_{\textup p,s}=O_p(n^{-1/2})+o_p(n^{-1/2})=O_p(n^{-1/2})$.
\end{proof}

\noindent The above derivations also allows us to characterise the bias and the variance of the MPL estimator, as well as their corresponding asymptotic orders. In particular, we find that
\begin{equation}
\mathbb E[\widehat{\boldsymbol\vartheta}]-\boldsymbol\vartheta_0=\mathbf F^{-1}(\boldsymbol\lambda)\mathbf S_{\boldsymbol\lambda}\boldsymbol\vartheta_0[1+o(1)]\label{eq:GAMdiscr.MPLEbias}
\end{equation}
and
\begin{equation}
\mathbb V\textup{ar}[\widehat{\boldsymbol\vartheta}]=-\mathbf F^{-1}(\boldsymbol\lambda)\mathbb E[\nabla_{\boldsymbol\vartheta_0\boldsymbol\vartheta_0^\top}\ell(\boldsymbol\vartheta_0)]\mathbf F^{-1}(\boldsymbol\lambda)[1+o(1)],\label{eq:GAMdiscr.MPLEvariance}
\end{equation}
with orders of $o(n^{-1/2})$ and $O(n^{-1})$, respectively. In fact, we immediately obtain the (asymptotic) equivalence
\begin{equation}
(\widehat{\vartheta}-\vartheta_0)^r=-f^{rs}(\lambda)\ell_{\textup p,s}[1+o_p(1)]\nonumber
\end{equation}
from which
\begin{equation}
\mathbb E[(\widehat{\vartheta}-\vartheta_0)^r]=-\mathbb E[f^{rs}(\lambda)(\ell_s-s_{\lambda}^s\vartheta_0^s)][1+o(1)]=f^{rs}s_{\lambda}^s\vartheta_0^s[1+o(1)]\nonumber
\end{equation}
and
\begin{equation}
\mathbb V\textup{ar}[\widehat{\vartheta}^r]=(f^{rs}(\lambda))^2\mathbb V\textup{ar}[\ell_s][1+o(1)]=-(f^{rs}(\lambda))^2f_{rs}(0)[1+o(1)].\label{eq:MPLEvarProof}
\end{equation}
Finally, invoking \ref{asmp:GAMdiscr.FisherInfo} and \ref{asmp:GAMdiscr.Smatrix}, and since $f^{rs}(\lambda)$ is $O(n^{-1})$, we compute $\mathbb E[(\widehat{\vartheta}-\vartheta_0)^r]=O(n^{-1})o(n^{1/2})=o(n^{-1/2})$, while $\mathbb V\textup{ar}[\widehat{\vartheta}^r]$ is led by terms of order $O(n^{-2})O(n)=O(n^{-1})$.

\section{Confidence Intervals Computation}\label{sctSM:GAMdiscr.CI}
At convergence of the estimation algorithm, the penalised GLS representation of the model induces a covariance matrix of the estimator of the form $\mathbf V_{\widehat{\boldsymbol\vartheta}}=\mathcal H_{\textup p}^{-1}\mathcal H\mathcal H_{\textup p}^{-1}$ which can in principle be used to compute the standard errors of each component of $\widehat{\boldsymbol\vartheta}$. An appealing alternative approach to conduct inference, however, is to advocate a Bayesian reasoning as based on the posterior distribution of $\boldsymbol\vartheta|\mathbf w$
\begin{equation}
\boldsymbol\vartheta|\mathbf w\sim\mathcal N_p([\mathbf{D}^\top\overline{\mathbf W}\mathbf D+\mathbf S_{\boldsymbol\lambda}]^{-1}\mathbf{D}^\top\overline{\mathbf W}\mathbf z,[\mathbf{D}^\top\overline{\mathbf W}\mathbf D+\mathbf S_{\boldsymbol\lambda}]^{-1}),\nonumber
\end{equation}
which is equivalent to choose $\boldsymbol\vartheta|\mathbf v\sim\mathcal N_p(\widehat{\boldsymbol\vartheta},\mathbf V_{\boldsymbol\vartheta})$, with $\mathbf V_{\boldsymbol\vartheta}:=-\mathcal H_{\textup p}^{-1}$. The Bayesian framework above emerges naturally from the specification of the model through a roughness penalty approach. In effect, as \cite{Wahba_83} and \cite{Silverman_85} recognised, the imposition of any kind of penalisation in the estimating procedure corresponds to the explication of some kind of prior beliefs about the likely features of the true model. Specifically, the definition of a normal prior for $\boldsymbol\vartheta$, $f_{\boldsymbol\vartheta}\propto\exp\{-1/2\boldsymbol\vartheta^\top\mathbf S_{\boldsymbol\lambda}\boldsymbol\vartheta\}$, implies that smoother models are more likely than wiggly ones, while it gives equal probability density to all models of equal smoothness (\citealp{Wood_06ANZJS}). The stated posterior distribution is then a consequence of the asymptotic normality of $\mathbf w:=\mathbf D^\top\mathbf W\mathbf z$. Upon re-writing $\mathbf w=\mathbf D^\top\mathbf W\mathbf D\boldsymbol\vartheta+\mathbf D^\top\mathbf u$, it holds that the last addendum is aymptotically bounded by a random vector with distribution $\mathcal N_p(\boldsymbol0_p,\mathbf D^\top\overline{\mathbf W}\mathbf D)$ because of \ref{asmp:GAMdiscr.3rdDeriv}, whereas the first one converges in probability to $\mathbf D^\top\overline{\mathbf W}\mathbf D\boldsymbol\vartheta$ from which the desired distribution follows.

For the construction of confidence intervals of the non-parametric model components, the employment of $\mathbf V_{\boldsymbol\vartheta}$ is usually preferred to $\mathbf V_{\widehat{\boldsymbol\vartheta}}$. In fact, as argued by \cite{MarraWood_12} in the context of GAMs, the former can produce intervals with close to nominal ``across-the-function'' frequentist coverage probabilities, as resulting from the inclusion in $\mathbf V_{\boldsymbol\vartheta}$ of both a bias and a variance component, a feature which is not shared instead by $\mathbf V_{\widehat{\boldsymbol\vartheta}}$. A key requirement at the basis of the result is that the magnitude of the bias component is substantially of a small portion compared to the sampling variability, an occurrence that is guaranteed wherever heavily over-smoothing is prevented (\citealp{Nychka_88}). Point-wise confidence intervals for the estimated non-parametric curve $\widehat{s}_{j,l_j}$ are then obtained from $\mathcal N(s_{j,l_j}(\textup v_{j,l_j,i}),\boldsymbol b_{j,l_j,i}^\top\mathbf V_{\boldsymbol\vartheta,[j,l_j]}\boldsymbol b_{j,l_j,i})$, where $\mathbf V_{\boldsymbol\vartheta,[j,l_j]}$ is the sub-matrix of $\mathbf V_{\boldsymbol\vartheta}$ corresponding to the parameters associated to the $(j,l_j)$-th smooth. 

More generally, confidence intervals for a non-linear function $T(\widehat{\boldsymbol\vartheta})$ of the MPLE can be constructed by a convenient simulation scheme from the posterior distribution $\boldsymbol\vartheta|\mathbf w$, as illustrated in the pseudo-code given in Algorithm \ref{alg:GAMdiscr.CI}. 

To conclude, the asymptotic equivalence between the frequentist and the Bayesian variance estimators can be establised as follows:

\begin{algorithm}[t]
\caption{Approximate $(1-\alpha)\%$ Confidence Interval (CI$_{1-\alpha}$) for $T(\widehat{\boldsymbol\vartheta})$} 
\label{alg:GAMdiscr.CI}      
\begin{algorithmic} 
\REQUIRE $\widehat{\boldsymbol\vartheta}$; $\mathbf D$; $\mathbf W$; $\mathbf S_{\boldsymbol\lambda}$; $N_{sim}$
\STATE{$\mathbf V_{\boldsymbol\vartheta}\Leftarrow(\mathbf D^\top\mathbf W\mathbf D+\mathbf S_{\boldsymbol\lambda})^{-1}$}\vspace{0.1cm}
\STATE{draw $\{\boldsymbol\vartheta_r^*\}_{r=1}^{N_{sim}}\sim\mathcal N_p(\widehat{\boldsymbol\vartheta},\mathbf V_{\boldsymbol\vartheta}(\widehat{\boldsymbol\vartheta}))$}\vspace{0.1cm}
\STATE{$\{T_r^*\}\Leftarrow\{T(\boldsymbol\vartheta_r^*)\}_r$}\vspace{0.1cm}
\STATE{$T^*:=\{T_{(1)}^*,\ldots,T_{(N_{sim})}^*\}$ such that $T^*_{(r)}\le T^*_{(r+1)}$, $\forall r=,1\ldots,N_{sim}$}\vspace{0.1cm}
\STATE{define $T^*_{\alpha}$ as the smallest $[N_{sim}\alpha]$-th value of $T^*$}\vspace{0.1cm}
\STATE{$\textup{CI}_{1-\alpha}\Leftarrow[T_{\alpha/2}^*,T_{1-\alpha/2}^*]$}
\end{algorithmic}
\end{algorithm}

\begin{cor}[to Proposition \ref{prop:GAMdiscr.MPLEconsistency}]\label{cor:GAMdiscr.varEquiv}
Under assumptions \ref{asmp:GAMdiscr.grad}-\ref{asmp:GAMdiscr.3rdDeriv} the re-scaled frequentist and Bayesian variance estimators $\sqrt{n}\mathbf V_{\widehat{\boldsymbol\vartheta}}$ and $\sqrt{n}\mathbf V_{\boldsymbol\vartheta}$, respectively, converge to the same quantity
\begin{equation}
-\sqrt{n}\mathbb E^{-1}[\nabla_{\boldsymbol\vartheta_0\boldsymbol\vartheta_0^\top}\ell(\boldsymbol\vartheta_0)]\nonumber
\end{equation}
as $n\rightarrow\infty$. 
\end{cor}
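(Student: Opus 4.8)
The plan is to verify the two convergences separately by reducing each of $\sqrt n\,\mathbf V_{\widehat{\boldsymbol\vartheta}}$ and $\sqrt n\,\mathbf V_{\boldsymbol\vartheta}$ to $-\sqrt n\,\mathbb E^{-1}[\nabla_{\boldsymbol\vartheta_0\boldsymbol\vartheta_0^\top}\ell(\boldsymbol\vartheta_0)]$ up to a multiplicative $[1+o_p(1)]$ factor, so that their difference is asymptotically negligible even after the $\sqrt n$ inflation. Both matrices are built from the penalized observed Hessian $\mathcal H_{\textup p}=\nabla_{\boldsymbol\vartheta\boldsymbol\vartheta^\top}\ell_{\textup p}(\widehat{\boldsymbol\vartheta})=\mathcal H-\mathbf S_{\boldsymbol\lambda}$ evaluated at the MPLE, through $\mathbf V_{\boldsymbol\vartheta}=-\mathcal H_{\textup p}^{-1}$ and $\mathbf V_{\widehat{\boldsymbol\vartheta}}=\mathcal H_{\textup p}^{-1}\mathcal H\mathcal H_{\textup p}^{-1}$; the whole argument then consists of three replacements, each of which has already been carried out inside the proof of Proposition \ref{prop:GAMdiscr.MPLEconsistency}: discarding $\mathbf S_{\boldsymbol\lambda}$, passing from the observed to the expected information, and moving the evaluation point from $\widehat{\boldsymbol\vartheta}$ to $\boldsymbol\vartheta_0$.

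For the Bayesian estimator I would reuse verbatim the expansions obtained there. From $\ell_{\textup p}^{rs}=f^{rs}(\boldsymbol\lambda)[1+O_p(n^{-1/2})]$ together with $f_{rs}(\boldsymbol\lambda)=f_{rs}(0)-s_{\boldsymbol\lambda}^{rs}$, and using \ref{asmp:GAMdiscr.FisherInfo} ($f_{rs}(0)=O(n)$) and \ref{asmp:GAMdiscr.Smatrix} ($s_{\boldsymbol\lambda}^{rs}=o(n^{1/2})$), one gets $f^{rs}(\boldsymbol\lambda)=f^{rs}(0)[1+o(n^{-1/2})]$ and hence $\ell_{\textup p}^{rs}=f^{rs}(0)[1+O_p(n^{-1/2})]$, i.e. the inverse penalized observed Hessian equals the inverse unpenalized expected Hessian up to $[1+o_p(1)]$. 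The only new point is that here everything is evaluated at $\widehat{\boldsymbol\vartheta}$ and not at $\boldsymbol\vartheta_0$; a further Taylor expansion $\ell_{rs}(\widehat{\boldsymbol\vartheta})=\ell_{rs}(\boldsymbol\vartheta_0)+\ell_{rst}(\bar{\boldsymbol\vartheta})(\widehat{\boldsymbol\vartheta}-\boldsymbol\vartheta_0)^{t}$ with $\ell_{rst}=O_p(n)$ by \ref{asmp:GAMdiscr.3rdDeriv} and $\widehat{\boldsymbol\vartheta}-\boldsymbol\vartheta_0=O_p(n^{-1/2})$ by Proposition \ref{prop:GAMdiscr.MPLEconsistency} gives $\ell_{rs}(\widehat{\boldsymbol\vartheta})=\ell_{rs}(\boldsymbol\vartheta_0)+O_p(n^{1/2})=f_{rs}(0)[1+O_p(n^{-1/2})]$, so this relocation does not affect the leading order. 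Collecting these, $\sqrt n\,\mathbf V_{\boldsymbol\vartheta}=-\sqrt n\,\mathbb E^{-1}[\nabla_{\boldsymbol\vartheta_0\boldsymbol\vartheta_0^\top}\ell(\boldsymbol\vartheta_0)]\,[1+o_p(1)]$.

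For the frequentist estimator I would exploit the algebraic identity obtained by substituting $\mathcal H=\mathcal H_{\textup p}+\mathbf S_{\boldsymbol\lambda}$ into the sandwich,
\[
\mathbf V_{\widehat{\boldsymbol\vartheta}}=\mathcal H_{\textup p}^{-1}\mathcal H\mathcal H_{\textup p}^{-1}=\mathcal H_{\textup p}^{-1}+\mathcal H_{\textup p}^{-1}\mathbf S_{\boldsymbol\lambda}\mathcal H_{\textup p}^{-1},
\]
and note that, since $\mathcal H_{\textup p}^{-1}=O_p(n^{-1})$ and $\mathbf S_{\boldsymbol\lambda}=o(n^{1/2})$ by \ref{asmp:GAMdiscr.Smatrix}, the correction term is $o_p(n^{-3/2})$, i.e. of strictly smaller order than $\mathcal H_{\textup p}^{-1}$ itself even after multiplication by $\sqrt n$. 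Hence $\sqrt n\,\mathbf V_{\widehat{\boldsymbol\vartheta}}=\sqrt n\,\mathcal H_{\textup p}^{-1}[1+o_p(1)]$, which by the previous paragraph equals $-\sqrt n\,\mathbb E^{-1}[\nabla_{\boldsymbol\vartheta_0\boldsymbol\vartheta_0^\top}\ell(\boldsymbol\vartheta_0)]\,[1+o_p(1)]$ once the central factor $\mathcal H$ of the sandwich is read as the unpenalized information $-\nabla_{\boldsymbol\vartheta\boldsymbol\vartheta^\top}\ell$, which for a correctly specified likelihood agrees in the limit with $-\mathbb E[\nabla_{\boldsymbol\vartheta_0\boldsymbol\vartheta_0^\top}\ell(\boldsymbol\vartheta_0)]$, so that both estimators carry the same sign. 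Subtracting the two displays yields the asserted equivalence.

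The main obstacle is not a single computation but the bookkeeping that licenses replacing the random matrix $\mathcal H_{\textup p}(\widehat{\boldsymbol\vartheta})$ by the deterministic $\mathbb E[\nabla_{\boldsymbol\vartheta_0\boldsymbol\vartheta_0^\top}\ell(\boldsymbol\vartheta_0)]$: this chains the negligibility of $\mathbf S_{\boldsymbol\lambda}$ (\ref{asmp:GAMdiscr.Smatrix}) against the $O(n)$ magnitude of the information (\ref{asmp:GAMdiscr.FisherInfo}), the $O_p(n^{1/2})$ observed-minus-expected fluctuation (\ref{asmp:GAMdiscr.diffFisher}), and the Taylor remainder in $\boldsymbol\vartheta$ controlled by \ref{asmp:GAMdiscr.3rdDeriv} and Proposition \ref{prop:GAMdiscr.MPLEconsistency}. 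Because \ref{asmp:GAMdiscr.diffFisher} furnishes only an $O_p(n^{1/2})$ --- not $o_p(n^{1/2})$ --- bound, the equivalence can only be of the multiplicative $[1+o_p(1)]$ type, which is precisely how ``converge to the same quantity'' has to be understood here, the target $-\sqrt n\,\mathbb E^{-1}[\nabla_{\boldsymbol\vartheta_0\boldsymbol\vartheta_0^\top}\ell(\boldsymbol\vartheta_0)]$ being itself $n$-dependent; upgrading it to convergence of $\sqrt n(\mathbf V_{\widehat{\boldsymbol\vartheta}}-\mathbf V_{\boldsymbol\vartheta})$ in operator norm would require an extra stabilisation of $n\,\mathbb E^{-1}[\nabla_{\boldsymbol\vartheta_0\boldsymbol\vartheta_0^\top}\ell(\boldsymbol\vartheta_0)]$ or a central-limit control on the Hessian, neither of which is assumed.
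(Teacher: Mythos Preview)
Your argument is correct and, for the Bayesian piece, coincides with the paper's: both rely on $\ell_{\textup p}^{rs}=f^{rs}(\boldsymbol\lambda)[1+o_p(1)]$ from Proposition~\ref{prop:GAMdiscr.MPLEconsistency} and then drop the penalty using \ref{asmp:GAMdiscr.Smatrix} against \ref{asmp:GAMdiscr.FisherInfo}. Your extra Taylor step relocating the evaluation point from $\widehat{\boldsymbol\vartheta}$ to $\boldsymbol\vartheta_0$ via \ref{asmp:GAMdiscr.3rdDeriv} is more careful than the paper, which leaves this implicit.

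For the frequentist piece the routes diverge. The paper does \emph{not} start from the sample sandwich $\mathcal H_{\textup p}^{-1}\mathcal H\mathcal H_{\textup p}^{-1}$; instead it invokes the population variance formula~(\ref{eq:GAMdiscr.MPLEvariance}), i.e.\ $-(f^{rs}(\boldsymbol\lambda))^2 f_{rs}(0)[1+o(1)]$, and then shows $\sqrt n(f_{rs}(0)-s_{\boldsymbol\lambda}^{rs})^{-2}f_{rs}(0)\approx\sqrt n(f_{rs}(0))^{-1}$ by the same order-balancing. You instead substitute $\mathcal H=\mathcal H_{\textup p}+\mathbf S_{\boldsymbol\lambda}$ to obtain $\mathbf V_{\widehat{\boldsymbol\vartheta}}=\mathcal H_{\textup p}^{-1}+\mathcal H_{\textup p}^{-1}\mathbf S_{\boldsymbol\lambda}\mathcal H_{\textup p}^{-1}$ and bound the correction term as $o_p(n^{-3/2})$. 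Your route works directly with the estimator as actually defined in Section~\ref{sctSM:GAMdiscr.CI} and makes the $\mathbf V_{\widehat{\boldsymbol\vartheta}}\!-\!\mathbf V_{\boldsymbol\vartheta}$ comparison a one-line algebraic identity plus a single order bound; the paper's route is shorter because it piggybacks on quantities already derived in the consistency proof, at the cost of tacitly identifying the sample sandwich with its population analogue. The sign hand-wave you flag (``once the central factor $\mathcal H$ is read as $-\nabla_{\boldsymbol\vartheta\boldsymbol\vartheta^\top}\ell$'') mirrors the paper's own loose sign conventions and does not affect the argument. Your closing remark on why only multiplicative $[1+o_p(1)]$ equivalence is attainable under \ref{asmp:GAMdiscr.diffFisher} is a useful clarification absent from the paper.
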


\begin{proof}
Using equation (\ref{eq:MPLEvarProof}) above, we derive
\begin{equation}
\mathbb V\textup{ar}[\widehat{\vartheta}^r]=(f^{rs}(\lambda))^2\mathbb V\textup{ar}[\ell_s][1+o(1)]=-(f^{rs}(\lambda))^2f_{rs}(0)[1+o(1)],\nonumber
\end{equation}
from which 
\begin{eqnarray}
\sqrt{n}\mathbb V\textup{ar}[\widehat{\vartheta}^r]&=&-\sqrt{n}(f_{rs}(0)-s_{\lambda}^{rs})^{-2}f_{rs}(0)[1+o(1)]\nonumber\\
&\approx &-\sqrt{n}^{-1}(\sqrt{n}^{-1}f_{rs}(0)-o(1))^{-2}f_{rs}(0)\approx-\sqrt{n}(f^{rs}(0))^2f_{rs}(0)=-\sqrt{n}(f_{rs}(0))^{-1};\nonumber
\end{eqnarray}
this corresponds immediately to the statement once written in matrix notation. Similarly, we can compute the asymptotic limit of the Bayesian variance estimator: reminding that $\mathbf V_{\boldsymbol\vartheta}=-(\nabla_{\boldsymbol\vartheta\boldsymbol\vartheta^\top}\ell_{\textup p}(\boldsymbol\vartheta))^{-1}$, analogous arguments as above lead to:
\begin{equation}
-\sqrt{n}\ell_{\textup p}^{rs}=-\sqrt{n}f^{rs}(\lambda)[1+o(1)]\approx-\sqrt{n}(f_{rs}(0))^{-1}\nonumber
\end{equation}
which concludes the proof.
\end{proof}

\section{Proof of Proposition \ref{prop:GAMdiscr.AIC}}\label{sctSM:GAMdiscr.AICproof}
Let us consider first a Taylor expansion of $-2\ell(\boldsymbol\vartheta^*)$ about $\boldsymbol\vartheta$:
\begin{equation}
-2\ell(\boldsymbol\vartheta^*)\approx-2\ell(\boldsymbol\vartheta)-2(\boldsymbol\vartheta^*-\boldsymbol\vartheta)^\top \nabla_{\boldsymbol\vartheta}\ell(\boldsymbol\vartheta)-(\boldsymbol\vartheta^*-\boldsymbol\vartheta)^\top\nabla_{\boldsymbol\vartheta\boldsymbol\vartheta^\top}\ell(\boldsymbol\vartheta)(\boldsymbol\vartheta^*-\boldsymbol\vartheta),\nonumber
\end{equation}
and recall that, in the large sample approximation, $-\nabla_{\boldsymbol\vartheta\boldsymbol\vartheta^\top}\ell(\boldsymbol\vartheta)=\mathbf D^\top\mathbf W\mathbf D\stackrel{p}{\longrightarrow}\mathbf D^\top\overline{\mathbf W}\mathbf D$, hence we can write the addenda in the above expression as:
\begin{equation}
(\boldsymbol\vartheta^*-\boldsymbol\vartheta)^\top\mathbf D^\top\overline{\mathbf W}\mathbf D(\boldsymbol\vartheta^*-\boldsymbol\vartheta)=\big\|\sqrt{\overline{\mathbf W}}(\overline{\mathbf z}-\mathbf D\boldsymbol\vartheta^*)\big\|^2+\big\|\sqrt{\overline{\mathbf W}}^{-1}\mathbf u\big\|^2-2\left\langle\overline{\mathbf z}-\mathbf D\boldsymbol\vartheta^*;\mathbf u\right\rangle\nonumber
\end{equation}
and
\begin{equation}
(\boldsymbol\vartheta^*-\boldsymbol\vartheta)^\top \mathbf D^\top\mathbf u=\big\|\sqrt{\overline{\mathbf W}}^{-1}\mathbf u\big\|^2-\left\langle\overline{\mathbf z}-\mathbf D\boldsymbol\vartheta^*;\mathbf u\right\rangle.\nonumber
\end{equation}
Then we have
\begin{equation}
-2\ell(\boldsymbol\vartheta^*)\approx-2\ell(\boldsymbol\vartheta)-\big\|\sqrt{\overline{\mathbf W}}^{-1}\mathbf u\big\|^2+\big\|\sqrt{\overline{\mathbf W}}(\overline{\mathbf z}-\mathbf D\boldsymbol\vartheta^*)\big\|^2\nonumber
\end{equation}
and, by noticing that the smoothing parameter vector affects the latter approximation only through the updated iteration $\boldsymbol\vartheta^*$, and that we are interested in optimising a criterion with respect to $\boldsymbol\lambda$, it is lecit to drop any addendum not depending on it. Hence, one can indifferently consider an equivalent UBRE criterion given by
\begin{equation}
\mathcal V_u\propto\big\|\sqrt{\overline{\mathbf W}}(\overline{\mathbf z}-\mathbf D\boldsymbol\vartheta^*)\big\|^2+2\textup{tr}(\mathbf P)\equiv2\textup{tr}(\mathbf P)-2\ell(\boldsymbol\vartheta^*).\nonumber
\end{equation}\hspace*{\fill}$\qedsymbol$

\section{Data Generating Process Employed in Figure \ref{fig:GAMdiscr.simSmooths}}\label{appx:GAMdiscr.DGP}
The simulation results depicted in Figure \ref{fig:GAMdiscr.simSmooths} comprises a bivariate system of equations specified by the following Data Generating Process (DGP):
\begin{equation}
\begin{array}{ccr}
y_{1,i}^*&=&\textup x_{1,i}+2\textup x_{2,i}+\textup x_{3,i}+s_{1,1}(\textup v_{1,i})+s_{1,2}(\textup v_{2,i})+\epsilon_{1,i}\\
y_{2,i}^*&=&-0.3 y_{1,i}^* +\textup x_{1,i}-2\textup x_{2,i}+s_{2,1}(\textup v_{1,i})+\epsilon_{2,i}
\end{array}\mbox{ }\mbox{ }\mbox{ }\boldsymbol\epsilon_i\sim\mathcal N_2\left(\left(\begin{array}{c}0\\0\end{array}\right),\left[\begin{array}{cc}1 & 0.9\\0.9 & 1\end{array}\right]\right)\nonumber
\end{equation}
for a sample size of $10,000$ observations, and $N=100$ replications. The test functions are displayed in red in the figure, and given by $s_{1,1}(\textup v_{1,i})=1-\textup v_{1,i}+1.6\textup v_{1,i}^2-\sin(5\textup v_{1,i})$, $s_{1,2}(\textup v_{2,i})=4\textup v_{2,i}$ and $s_{2,1}(\textup v_{1,i})=0.08\{\textup v_{1,i}^{11}[10(1-\textup v_{1,i})]^6+10(10\textup v_{1,i})^3(1-\textup v_{1,i})^{10}\}$. Furthermore, the ordered values of $y_{j,i}$ have been computed following the observation rule:
\begin{equation}
y_{j,i}=\sum_{k_j\in\mathcal K_j}k_j\mathds 1_{c_{j,k_j-1}<y_{j,i}^*\le c_{j,k_j}},\nonumber
\end{equation}
for every $j\in\{1,2\}$, and obtained by setting the threshold parameters at $\boldsymbol c_1:=(-2,-1, 0,2)^\top$ and $\boldsymbol c_2:=(-1.4,-0.7,-0.2,0.7,3)^\top$.

\end{document}